\newcommand{\tabitem}{~~\llap{\textbullet}~~}
\definecolor{pblue}{rgb}{0.13,0.13,1}
\definecolor{pgreen}{rgb}{0,0.5,0}
\definecolor{pred}{rgb}{0.9,0,0}
\definecolor{pgrey}{rgb}{0.46,0.45,0.48}
\DeclareMathAlphabet{\mycal}{OMS}{zplm}{m}{n}
\definecolor{pblue}{rgb}{0.13,0.13,1}
\definecolor{pgreen}{rgb}{0,0.5,0}
\definecolor{pred}{rgb}{0.9,0,0}
\definecolor{pgrey}{rgb}{0.46,0.45,0.48}
\DeclareMathAlphabet{\mycal}{OMS}{zplm}{m}{n}
\newcommand\bcmdtab{\noindent\bgroup\tabcolsep=0pt%
  \begin{tabular}{@{}p{10pc}@{}p{20pc}@{}}}
\newcommand\ecmdtab{\end{tabular}\egroup}
\begin{document}

\title{Visualization of Constraint Handling Rules: Semantics and Applications} 
\author{Nada Sharaf}
\authornote{The corresponding author}
\email{nada.hamed@guc.edu.eg}
\author{Slim Abdennadher}
\email{slim.abdennadher@guc.edu.eg}
\affiliation{%
  \institution{The German University in Cairo}
  \department{Computer Science and Engineering}
  \city{Cairo}
  \country{Egypt}
}
\author{Thom Fr\"{u}hwirth}
\affiliation{%
  \institution{Ulm University}
  \city{Ulm}
  \country{Germany}}

\begin{abstract}
The work in the paper presents an animation extension ($CHR^{vis}$) to Constraint Handling Rules (\textsf{CHR}).
Visualizations have always helped programmers understand data and debug programs. A picture is worth a thousand words. It can help identify where a problem is or show how something works. It can even illustrate a relation that was not clear otherwise. 
$CHR^{vis}$ aims at embedding animation and visualization features into \textsf{CHR} programs.  It thus enables users, while executing programs, to have such executions animated. The paper aims at providing the operational semantics for $CHR^{vis}$. The correctness of $CHR^{vis}$ programs is also discussed. Some applications of the new extension are also introduced.
\end{abstract}

%
%
\begin{CCSXML}
<ccs2012>
<concept>
<concept_id>10011007.10011006.10011050.10011058</concept_id>
<concept_desc>Software and its engineering~Visual languages</concept_desc>
<concept_significance>500</concept_significance>
</concept>
<concept>
<concept_id>10011007.10011006.10011050.10011058</concept_id>
<concept_desc>Software and its engineering~Visual languages</concept_desc>
<concept_significance>500</concept_significance>
</concept>
<concept>
<concept_id>10011007.10011006.10011066.10011069</concept_id>
<concept_desc>Software and its engineering~Integrated and visual development environments</concept_desc>
<concept_significance>500</concept_significance>
</concept>
 <concept>
  <concept_id>10010520.10010553.10010562</concept_id>
  <concept_desc>Computer systems organization~Embedded systems</concept_desc>
  <concept_significance>500</concept_significance>
 </concept>
</ccs2012>  
\end{CCSXML}

\ccsdesc[500]{Software and its engineering~Constraint and logic languages }
\ccsdesc[500]{Software and its engineering~Visual language}
\ccsdesc[500]{Software and its engineering~Integrated and visual development environments}
\ccsdesc[300]{Human-centered computing~Information visualization}
\ccsdesc[300]{Human-centered computing~Visualization theory, concepts and paradigms}

%
%


\keywords{Constraint Handling Rules, Operational Semantics, Visualization, Animation}

\maketitle

\section{Introduction}
Animation tools are considered as a basic construct of programming languages. They are used to visualize the execution of a program. They provide users with a simple and intuitive method to debug and trace programs.
This paper presents an extension to Constraint Handling Rules (\textsf{CHR}). The extension adds new visual features to \textsf{CHR}. 
It thus enables users to have executions of \textsf{CHR} programs animated.

CHR \cite{chrbook,Fru98} has evolved over the years into a general purpose language. Originally, it was proposed for writing constraint solvers. Due to its declarativity, it has, however, been used with different algorithms such as sorting algorithms, graph algorithms, ... etc. \textsf{CHR} lacked tracing and debugging tools. Users were only able to use SWI-Prolog's textual trace facility as shown in Figure \ref{fig:textualtrace} which is hard to follow especially with big programs.
\begin{figure}[!ht]
  \subfloat[Using the normal trace option]{\hbox{\hspace{0cm}\includegraphics[width=100mm]{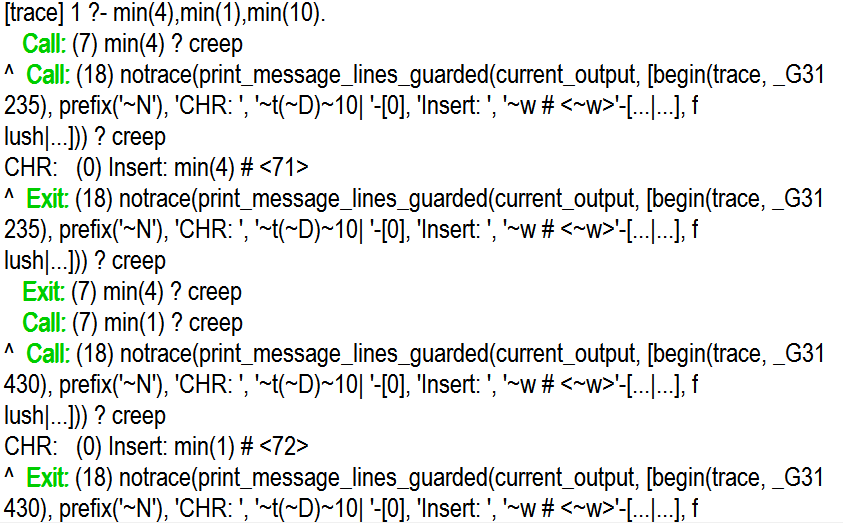}}}
  \\\subfloat[Using the chr\_trace option]{\hbox{\hspace{0cm}\includegraphics[width=100mm]{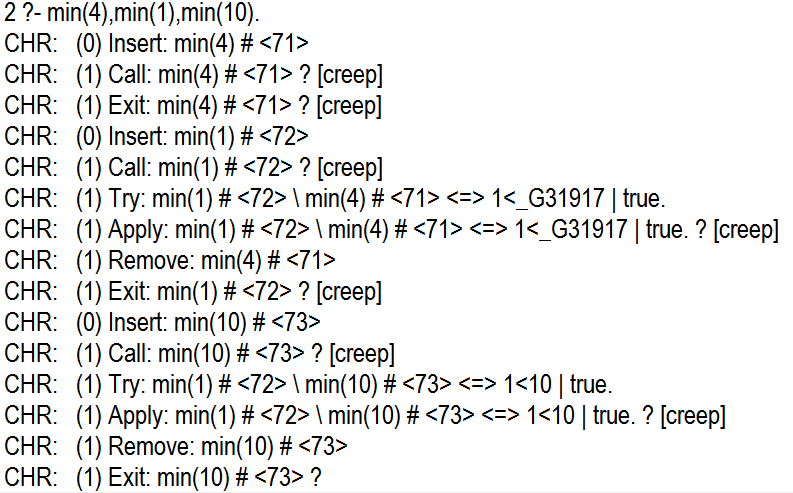}}}
  \caption{Current Tracing Facilities in SWI-Prolog.}
  \label{fig:textualtrace}
\end{figure}

{Two types of visual facilities are important for a \textsf{CHR} programmer/beginner. Firstly, the programmer would like to get a visual trace showing which \textsf{CHR} rule gets applied at every step and its effect.
Secondly, since \textsf{CHR} has developed into a general purpose language, it has been used with different types of algorithms such as sorting and graph algorithms. It is thus important to have a visual facility to animate the execution of the algorithms rather than just seeing the rules being executed.}
{\textsf{CHR} lacked such a tool. The tool should be able to adapt with the execution nature of \textsf{CHR} programs where constraints are added and removed continuously from the constraint store.}

Several approaches have been devised for visualizing \textsf{CHR} programs and its execution. In \cite{Abdennadher01avisualization}, a tool called \emph{VisualCHR} was proposed. VisualCHR allows its users to visually debug constraint solving. The compiler of JCHR \cite{schmauss_jack:jchr_1999} (on which VisualCHR is based) was modified. The visualization feature was thus not available for Prolog versions, the more prominent implementation of CHR.
\cite{DBLP:conf/iclp/AbdennadherS12} introduced a tool for visualizing the execution of \textsf{CHR} programs. It was able to show at every step the constraint store and the effect of applying each \textsf{CHR} rule in a step-by-step manner. The tool was based on the SWI-Prolog implementation of CHR. Source-to-source transformation was used in order to eliminate the need of doing any changes to the compiler. The tool could thus be deployed directly by any user. Source-to-source transformation is able to convert an original program automatically to a new one with the new features.

Despite of the availability of such visualization tools, \textsf{CHR} was still missing on a system for animating algorithms. The available tools were able to show at each point in time the executed rule and the status of the constraint store. However, the algorithm implemented had no effect on the produced visualization. 
Existing algorithm animation tools were either not generic enough or could not be adopted with \textsf{CHR}. For example, one of the available tools is XTANGO \cite{Stasko:xtango} which is a general purpose animating system supporting the development of algorithm
animations. However, the algorithm should be implemented in C or another language
such that it produces a trace file to be read by a C program driver making it difficult to use with \textsf{CHR}.
Due to the wide range of algorithms implemented through \textsf{CHR}, an algorithm-based animation was needed. Such animation should show at each step in time the changes to the data structure affected by the algorithm. 

{The paper presents a different direction for animating \textsf{CHR} programs.
It allows users to animate any kind of algorithm implemented in \textsf{CHR}.
This direction thus augments \textsf{CHR} with an animation extension.
As a result, it allows a \textsf{CHR} programmer to trace the program from an algorithmic point of view independent of the details of the execution of its rules.
The formal analysis of the new extension is presented in the paper. 
The paper thus presents a new operational semantics of \textsf{CHR} that embeds visualization into its execution.
The formalism is able to capture not only the behavior of the \textsf{CHR} rules, it is also able to represent the graphical objects associated with the animation. 
It is used to prove the correctness of the programs extended with animation features.
To eliminate the need of users learning the new syntax for using the extension, a transformation approach is also provided.
The correctness of the transformation approach is presented in the paper too.
}

The paper is organized as follows: Section \ref{sec:chr} introduces CHR. It starts with its syntax. It then shows two examples to show how \textsf{CHR} operates. The theoretical operational semantics is given in Section \ref{sec:theorsem}. The refined operational semantics is then introduced in  Section \ref{sec:refined}. Section \ref{sec:chrvis} introduces the new extension. The section starts by giving a general introduction. The syntax of annotation rules is then discussed and two examples are given. Finally, in Section \ref{sec:form} the formalization is given by introducing $\omega_{vis}$, a new operational semantics for \textsf{CHR} that accounts for annotation rules. The basic transitions include ones for simple constraint annotations. Afterwards, a more general semantics including rule annotations is presented. The separation was done to build the general semantics in an easy to follow method. Proofs of soundness and completeness are given.
Section \ref{sec:trans} presents a transformation approach that is able to transform normal \textsf{CHR} programs to $CHR^{vis}$ programs.
Section \ref{sec:app} introduces some applications that were built using $CHR^{vis}$. Conclusions and directions for future work are presented at the end of the paper.

\section{Constraint Handling Rules}
\label{sec:chr}
CHR was initially developed for writing constraint solvers \cite{chrbook,Fru98}.
The rules of a \textsf{CHR} program keeps on rewriting the constraints in the constraint store until a fixed point is reached. At that point no \textsf{CHR} rules could be applied. The constraint store is initialized by the constraints in the query of ths user.
\textsf{CHR} has implementations in different languages such as Java, C and Haskell. The most prominent implementation, however, is the Prolog one. A \textsf{CHR} program has two types of constraints: user-defined/\textsf{CHR} constraints and built-in constraint. \textsf{CHR} constraints are defined by the user at the beginning of a program. Built-in constraints, on the other hand, are handled by (the constraint theory $\mycal{CT}$) of the host language.

A \textsf{CHR} program consists of a set of, ``simpagation rules". A simpagation rule has the following format:
\begin{Verbatim}
            optional_rule_name @ H_k \ H_r <=> G | B.
\end{Verbatim}
\(H_{k}\) and \(H_{r}\) represent the head of the rule. The body of the rule is \(B\). The guard \(G\) represents a precondition for applying the rule. A rule is only applied if the constraint store contains constraints that match the head of the rule and if the guard is satisfied. As seen from the previous rule, the head has two parts: \(H_{k}\) and \(H_{r}\). The head of a rule could only contain \textsf{CHR} constraints. The guard should consist of built-in constraints. The body, on the other hand, can contain \textsf{CHR} and built-in constraints. On applying the rule, the constraints in \(H_{k}\) are kept in the constraint store. The constraints in \(H_{r}\) are removed from the constraint store. The body constraints are added to the constraint store.

There are two special kinds of \textsf{CHR} rules: propagation rules and simplification rules. A propagation rule has an empty \(H_{r}\) thus such a rule does not remove any constraint from the constraint store.
It has the following format:
\begin{Verbatim}[commandchars=\\\{\}]
            optional\_rule\_name @ H_{k}\; ==> G | B.
\end{Verbatim}
A simplification rule on the other hand has an empty \(H_{k}\). A simplification rule removes all the head constraints from the constraint store.
A simplification rule has the following format:
\begin{Verbatim}[commandchars=\\\{\}]
           optional\_rule\_name @ H_{r} <=> G | B. 
\end{Verbatim}
\subsection{Examples}
This section introduces examples of \textsf{CHR} programs. Because of the declarativity of \textsf{CHR} and its rules, it is possible to implement different algorithms with single-ruled programs. The first program shows a sorting algorithm. The second program (consisting of two rules) aims at finding the minimum number out of a set of numbers.

\subsubsection{Extracting the Minimum Number}
\label{subsec:extractingmin}
\(\)\\The following program extracts the minimum number out of a set of numbers.
Each number is represented by a \verb+min/1+ The rule \verb+get_min+ is a simpagation \textsf{CHR} rule. It is applied whenever there are two numbers \verb+X+ and \verb+Y+ such that \verb+X+ is less than \verb+Y+.
The rule removes the bigger number, \verb+Y+, from the constraint store. The smaller number, \verb+X+, is kept in the constraint store. Thus, similar to the previous program, when the rule is no longer applicable, a fixed point is reached. By then, the constraint store would only contain the smallest number.
The rule \verb+remove_dup+ is used to remove identical copies of the same number. The rule is thus fired if the user entered a sequence of numbers that contains duplicates.
\begin{verbatim}
:-chr_constraint min/1.
remove_dup @ min(X) \ min(X) <> true.
get_min @ min(X) \ min(Y) <=> X<Y | true.
\end{verbatim}
The following example shows the steps taken to find the minimum number out of the set: \verb+8+, \verb+3+ and \verb+1+.
\begin{center}
\(\underline{min\left(8\right),min\left(3\right)}, min\left(1\right)\)
\\\(\Downarrow\)\\
 \( \underline{min\left(3\right), min\left(1\right)}\)
\\\(\Downarrow\)\\
 \(min\left(1\right)\)
\end{center}

\subsubsection{Sorting an Array}
\label{sec:sorting}
\(\)\\The program aims at sorting numbers in an array/list. Each number is represented by the constraint \verb+cell(I,V)+. \verb+I+ represents the index and \verb+V+ represents the value of the element. The program contains one rule: \verb+sort_rule+. It is applied whenever the constraint store contains two \verb+cell+ constraints representing two unsorted elements. The guard makes sure that the two elements are not sorted with respect to each other. The element at index \verb+I1+ has a value (\verb+V1+) that is greater than the value (\verb+V2+) of the element at index \verb+I2+. \verb+I1+ is less than \verb+I2+. Thus, \verb+V1+ precedes \verb+V2+ in the array despite of the fact that it is greater than it. Since \verb+sort_rule+ is a simplification rule, the two constraints representing the unsorted elements are removed from the constraint store. Two \verb+cell+ constraints are added through the body of the rule to represent the performed swap to sort the two elements. The program is shown below:
\begin{verbatim}
:-chr_constraint cell/2.
sort_rule @ cell(In1,V1), cell(In2,V2) <=> In1<In2,V1>V2 | 
                                      cell(In2,V1), cell(In1,V2).
\end{verbatim}
Successive applications of the rule makes sure that any two elements that are not sorted with respect to each other are swapped. The fixed point is reached whenever \verb+sort_rule+ is no longer applicable. At this point, the array is sorted.
\\The following shows how the query \verb+cell(0,7), cell(1,6), cell(2,4)+ affects the constraint store:\footnote{The examples in this section are running with the refined operational semantics $\omega_{r}$ explained later. For simplicity, Some details are omitted.}
\begin{center}
\(\underline{cell\left(0,7\right), cell\left(1,6\right)}, cell\left(2,4\right)\)
\\\(\Downarrow\)\\
 \(cell\left((1,7\right), \underline{cell\left((0,6\right), cell\left((2,4\right)}\)
\\\(\Downarrow\)\\
 \(\underline{cell\left((1,7\right), cell\left((2,6\right)}, cell\left((0,4\right)\)
\\\(\Downarrow\)\\
 \(cell\left((2,7\right), cell\left((1,6\right), cell\left((0,4\right)\)
\end{center}

\subsection{Theoretical Operational Semantics $\omega_{t}$}
\label{sec:theorsem}
A \textsf{CHR} state is represented by the tuple $\langle G,S,B,T\rangle^{V}_{n}$ \cite{chrbook,TLP:7834577}. $G$ represents the goal store. It initially contains the query of the user. $S$ is the \textsf{CHR} constraint store containing the currently available \textsf{CHR} constraints. $B$, on the other hand, is the built-in store with the built-ins handled by the host language (Prolog in this case). The propagation history, $T$, holds the names of the applied \textsf{CHR} rules along with the IDs of the \textsf{CHR} constraints that activated the rules. $T$ is used to make sure a program does not run forever. Each \textsf{CHR} constraint is associated with an ID. $n$ represents the next available ID. $V$ represents the set of global variables. Such variables are the ones that exist in the initial query of the user. $V$ does not change during execution, it is thus omitted throughout the rest of the paper.
\\A variable $v \notin V$ is called a local variable \cite{raiserequiv}.

\begin{definition}
The function \verb+chr+ is defined such that \verb+chr(c#n)+ = \verb+c+. It is extended into sequences and sets of \textsf{CHR} constraints.  Likewise, the function \verb+id+ is defined such that \verb+id(c#n)+ = \verb+n+. It is also extended into sequences and sets of \textsf{CHR} constraints. 
\end{definition}

\begin{definition}
$vars\left(A\right)$ denote the variables occurring in $A$.
$\mathop{\exists}\limits^{-}{}_{A}F$ denotes $\exists X_1,\ldots \exists X_n F$ where vars(A) $\backslash$ vars(F) = $\{X_1,\ldots,X_n\}$ \cite{DBLP:conf/ppdp/KoninckSD07,DBLP:conf/iclp/DuckSBH04}
\end{definition}

Table \ref{table:omegat} shows the transitions of $\omega_{t}$ (from \cite{chrbook}).
The transitions are presented below:
\setlength{\textfloatsep}{0cm} 
\setlength{\abovecaptionskip}{0cm}
\begin{table}[ht]
  \begin{tabular}{l}
    \hline
    1. \textbf{Solve}
			: $\langle\{c\} \uplus G, S,B,T\rangle_{n} \mapsto_{solve} \langle G,S,B',T\rangle_{n}$ 
			\\given that $c$ is a built-in constraint and $\mycal{CT} \models \forall((c\wedge B \leftrightarrow B'))$\\
    \hline
		2. \textbf{Introduce}
		: $\langle\{c\} \uplus G, S,B,T>_{n} \mapsto_{introduce} \langle G,S \cup \{c\#n\},B,T\rangle_{n+1}$ 
		\\given that $c$ is a \textsf{CHR} constraint.\\
		 \hline
		3. \textbf{Apply}
		: $\langle G, H_{k}\uplus H_{r}\uplus S,B,T\rangle_{n} \mapsto_{apply}$
		\\$\langle C \uplus G , H_{k}\cup S ,$
		{$chr\left(H_{k}\right)=(H'_{k})\wedge chr\left(H_{r}\right)=(H'_{r}) 
		 \wedge g \wedge B$} $,T \cup \{ \langle r,id\left(H_{k}\right) + id\left(H_{r}\right) \rangle \}\rangle_{n}$ \\
		where 
		{there is a renamed rule in $P$ with variables $x'$ with the form}\\
		$r\; @\; H'_{k}\;\backslash \; H'_{r} \Leftrightarrow \; g \; | \; C.$\\
		such that 
		{$ \mycal{CT} \models \exists(B) \wedge \forall (B \implies \exists x'($
		$chr\left(H_{k}\right)=(H'_{k})\wedge chr\left(H_{r}\right)=(H'_{r}) \wedge g))$}
		\\{and $\langle r,id\left(H_{k}\right) + id\left(H_{r}\right) \rangle\notin T$}
		
		 \\\hline
  \end{tabular}
	\caption{Transitions of $\omega_{t}$}
	\label{table:omegat}
	\end{table}
		\\\textbf{Solve}
	\\The transition \emph{solve}, adds the built-in constraint $c$, in the goal store, to the built-in constraint store. The new built-in store is basically $B \wedge c$. The new store is also simplified. 
			\\\\\textbf{Introduce}
	\\ Similarly, \emph{Introduce}, adds a \textsf{CHR} constraint ($c$) to the \textsf{CHR} constraint store. The ID of the constraint is $n$. The next available ID is $n+1$. The new \textsf{CHR} store is $S \cup \{c\#n\}$.
	\\\\\textbf{Apply}
	\\The \emph{Apply} transition executes an available \textsf{CHR} rule \verb+r+. 
	When a rule is used, its variables are renamed apart from the program and the current state \cite{chrbook}. The fresh variant of the rule with variables the new variables $x'$: 	$r\; @\; H'_{k}\;\backslash \; H'_{r} \Leftrightarrow \; g \; | \; C$ is executable under 
	the applicability condition: 	{$ \mycal{CT} \models \exists(B) \wedge \forall (B \implies \exists x'($
		$chr\left(H_{k}\right)=(H'_{k})\wedge chr\left(H_{r}\right)=(H'_{r}) \wedge g))$}.
		It first checks the built-in constraint store $B$ for satisfiablity. The rule is also only applied if the \textsf{CHR} constraint store contains matching constraints through the checks $chr\left(H_{k}\right)=(H'_{k})\wedge chr\left(H_{r}\right)=(H'_{r})$.
		Under the matching, the guard $g$ is also checked. It has to be logically entailed by the built-in constraints $B$ under the constraint theory $\mycal{CT}$ \cite{chrbook}.
	To apply the rule, the propagation history should not contain a tuple representing the same constraints associated with the same rule signifying that this is the first time this rule is applied with those constraint(s).
	The constraints in the body are added to the goal. In addition, the propagation history is updated accordingly.
	Table \ref{tab:tableexamplewt} shows the query $min\left(20\right),min\left(8\right),min\left(1\right)$ executed under $\omega_{t}$ for the program shown in Section \ref{subsec:extractingmin}.
\begin{table}[h]
\begin{tabular}{ l l p{7cm} }
  & $ \langle \{min\left(20\right),min\left(1\right),min\left(8\right)\},\phi \rangle_{1}$  & \(\) \\
	
 $ \xmapsto[\text{introduce}]{}\mathrel{\vphantom{\to}^*} $ & $ 
\langle \{min\left(1\right),min\left(8\right)\},\{min\left(20\right)\#1\} \rangle_{2}$  & The constraint $min\left(20\right)$ is introduced to the constraint store. \\
$ \xmapsto[\text{introduce}]{}\mathrel{\vphantom{\to}^*} $ & $ 
\langle \{min\left(8\right)\},\{min\left(20\right)\#1,min\left(1\right)\#2\} \rangle_{3}$  & The constraint $min\left(1\right)$ is introduced to the store. \\
$ \xmapsto[\text{apply}]{}\mathrel{\vphantom{\to}^*} $ & $ 
\langle \{min\left(8\right)\},\{min\left(1\right)\#2\} \rangle_{3}$  & The rule \verb+remove_min+ is fired removing \verb+min(20)+ from the constraint store. \\
$ \xmapsto[\text{introduce}]{}\mathrel{\vphantom{\to}^*} $ & $ 
\langle \{\},\{min\left(1\right)\#2,min\left(8\right)\#3\} \rangle_{4}$  & The constraint $min\left(8\right)$ is introduced to the store. \\
$ \xmapsto[\text{apply}]{}\mathrel{\vphantom{\to}^*} $ & $ 
\langle \{\},\{min\left(1\right)\#2\} \rangle_{4}$  & The rule \verb+remove_min+ is fired removing \verb+min(8)+ from the constraint store. \\
 \hline
\end{tabular}

 \caption{Query $min\left(20\right),min\left(8\right),min\left(1\right)$ running under $w_t$}
 \label{tab:tableexamplewt}
\end{table}		
\subsection{Refined Operational Semantics $\omega_{r}$}
\label{sec:refined}
The refined operational semantics \cite{refined,chrbook} is adapted in most implementations of \textsf{CHR}. It removes some of the sources of the non-determinism that exists in $w_{t}$. In $w_{t}$ the order in which constraints are processed and the order of rule application in non-deterministic. However, in $w_{r}$, rules are executed in a top-down manner. Thus, in the case where there are two matching rules , $w_{r}$ ensures that the rule that appears on top in the program is executed. Details of how $w_{r}$ works are shown in this section.
Each atomic head constraint is associated with a number (occurrence). Numbering starts from $1$. It follows a top-down approach as well.
For example the previously shown program to find the minimum value is numbered as follows:
\begin{verbatim}
    remove_dup @ min(X)_2\min(X)_1<=>true.
    remove_min @ min(X)_4\min(Y)_3<=>X<Y | true.
\end{verbatim}

\begin{definition}
The active/occurrenced constraint c\#i:j refers to a numbered constraint that should only match with occurrence j of the constraint c inside the program. $i$ is the identifier of the constraint \cite{refined}.
\end{definition}

A state in $w_{r}$ is the tuple $<A,S,B,T>_{n}$. Unlike $w_{t}$, the goal $A$ is a stack instead of a multi-set. $S,B, T$ and $n$ have the same interpretation as an $w_{t}$ state.
In the refined operational semantics, constraints are executed similar to procedure calls. Each constraint added to the store is activated. An active constraint searches for an applicable rule. The rule search is done in a top-down approach. If a rule matches, the newly added constraints (from the body of the applied rule) could in turn fire new rules. Once all rules are fired, execution resumes from the same point.
Constraints in the constraint store are reconsidered/woken if a newly added built-in constraint could affect them (according to the wakeup policy).
An active constraint thus tries to match with all the rules in the program

Table \ref{table:omegaref} shows the transitions of $w_{r}$.
\setlength{\textfloatsep}{0cm} 
\setlength{\abovecaptionskip}{0cm}
\begin{table}[!ht]
  \begin{tabular}{l}
    \hline
    \\
		1. \textbf{Solve+wakeup}
			: $\langle \left[c|A\right],  S_{0}\uplus S_{1},B,T\rangle_{n} 
			\mapsto_{solve+wake}
			 \langle S_{1}++A, S_{0}\uplus S_{1},B',T\rangle_{n}$
			\\given that $c$ is a built-in constraint and $\mycal{CT} \models \forall((c\wedge B \leftrightarrow B'))$\\
			and $wakeup\left(S_{0}\uplus S_{1} ,c,B\right) = S_{1}$ \\\\
    \hline
		\\2. \textbf{Activate}
		$\langle \left[c|A\right],  S,B,T\rangle_{n} 
			\mapsto_{activate}
			\langle \left[c\#n:1|A\right],  {c\#n} \cup S,B,T\rangle_{n+1}
			$
		\\given that $c$ is a \textsf{CHR} constraint.
		\\
		 \\\hline
		 	\\3. \textbf{Reactivate}
		$\langle \left[c\#i|A\right],  S,B,T\rangle_{n} 
			\mapsto_{reactivate}
			\langle \left[c\#i:1|A\right],   S,B,T\rangle_{n}
			$
		\\given that $c$ is a \textsf{CHR} constraint.
		\\
		 \\\hline
		 	\\4. \textbf{Apply}
		$\langle \left[c\#i:j|A\right], H_1 \uplus H_2 \uplus S,B,T\rangle_{n} 
			\mapsto_{apply\;r}$
			\\$\langle C + H + A,  H_1 \cup S,$
			{$chr\left(H_{1}\right)=(H'_{1})\wedge chr\left(H_{2}\right)=(H'_{2}) 
		 \wedge g \wedge B$} $,
			T \cup \{\left(r,id\left(H_1\right)+id\left(H_2\right)\right)\}\rangle_{n}
			$
		\\given that the jth occurrence of $c$ is part of the head of the re-named apart rule with variables $x'$:
		\\{\(r\; @\; H'_1\; \backslash \;H'_2 \; \Leftrightarrow \;g \;|\; C.\)} 
		\\{where $ \mycal{CT} \models \exists(B) \wedge \forall(B \implies \exists x'\left(\left(chr\left(H_1\right)=(H'_1) \wedge chr\left(H_2\right)=(H'_2)\wedge g\right)\right)$}
		\\and $\left(r,id\left(H_1\right)+id\left(H_2\right)\right) \notin T$.
		\\If $c$ occurs in $H'_1$ then $H=\left[c\#i:j\right]$ otherwise $H=\left[\right]$.
		\\
		 \\\hline
		\\5. \textbf{Drop}
		\\
		$\langle \left[c\#i:j|A\right],  S,B,T\rangle_{n} 
			\mapsto_{drop}
		\langle A,  S,B,T\rangle_{n} 
			$
		\\given that $c\#i:j$ is an occurrenced active constraint and $c$ has no occurrence $j$ in the program. 
		\\That could thus imply that all existing occurrences were tried before.
		\\
		 \\\hline
		\\
		 6. \textbf{Default}
		$\langle \left[c\#i:j|A\right],  S,B,T\rangle_{n} 
			\mapsto_{default}
		\langle \left[c\#i:j+1|A\right],  S,B,T\rangle_{n} 
			$
		\\
		in case there is no other applicable transition.
		\\\\
		 \hline
  \end{tabular}
	\caption{Transitions of $\omega_{r}$}
	\label{table:omegaref}
	\end{table}

\subsubsection{Solve+Wake}
\(\)\\This transition introduces a built-in constraint $c$ to the built-in store. In addition, all constraints that could be affected by $c$ ($S{1}$) are woken up by adding them on top of the stack. These constraints are thus re-activated. A constraint where all its terms have become ground will not be thus woken up by the implemented wake-up policy since it is never affected by a new built-in constraint.
$vars\left(S_{0}\right) \subseteq fixed\left(B\right)$ where $fixed\left(B\right)$ represents the variables fixed by $B$. 

\subsubsection{Activate}
\(\)\\This transition introduces a \textsf{CHR} constraint into the constraint store and activates it. The introduced constraint has the occurrence value $1$ as a start. 

\subsubsection{Reactivate}
\(\)\\The reactivate transition considers a constraint that was already added to the store before. It became re-activated and was added to the stack. The transition activates the constraint by associating it with an occurrence value starting with $1$.

\subsubsection{Apply}
\(\)\\This transition applies a \textsf{CHR} rule $r$ if an active constraint matched a constraint in the head of $r$ with the same occurrence number.
If the matched constraint is part of the constraints to be removed, it is also removed from the stack. Otherwise, it is kept in the constraint store and the stack.

\subsubsection{Drop}
\(\)\\This transition removes the active constraint $c\#i:j$ from the stack when there no more occurrences to check. This occurs when the occurrence number of the active constraint does not appear in the program. In other words, the existing ones were tried.

\subsubsection{Default}
\(\)\\This transition proceeds to the next occurrence of the constraint if the currently active one could not be matched with the associated rule. This transition ensures that all occurrences are tried. 

\subsubsection{Running Example}
\(\)\\In the beginning of the section, a \textsf{CHR} program to find the minimum number among a set of numbers was given. The program was written with the numbered constraint occurrences as used by $\omega_{r}$. For the query min(20),min(1),min(8), the steps shown in Table \ref{tab:table1} take place.

\begin{table}[h]
\begin{tabular}{ l l p{5cm} }
  & $ \langle \left[min\left(20\right),min\left(1\right),min\left(8\right)\right],\phi \rangle_{1}$  & \(\) \\
	
 $ \xmapsto[\text{activate}]{}\mathrel{\vphantom{\to}^*} $ & $ 
\langle \left[min\left(20\right)\#1:1,min\left(1\right),min\left(8\right)\right],\{min\left(20\right)\#1:1\} \rangle_{2}$  & The constraint $min\left(20\right)$ is activated. \\
  $ \xmapsto[\text{default}]{}\mathrel{\vphantom{\to}^*} $ & $ 
	\langle \left[min\left(20\right)\#1:2,min\left(1\right),min\left(8\right)\right],\{min\left(20\right)\#1:1\} \rangle_{2}$  & The active constraint did not fire any rule so the occurrence number is incremented. \\
    $ \xmapsto[\text{default}]{}\mathrel{\vphantom{\to}^*} $ & $ 
		\langle \left[min\left(20\right)\#1:2,min\left(1\right),min\left(8\right)\right],\{min\left(20\right)\#1:1\} \rangle_{2}$  & The occurrence number is incremented. \\
      $ \xmapsto[\text{default}]{}\mathrel{\vphantom{\to}^*} $ & $ 
			\langle \left[min\left(20\right)\#1:3,min\left(1\right),min\left(8\right)\right],\{min\left(20\right)\#1:1\} \rangle_{2}$  & The occurrence number is incremented. \\
        $ \xmapsto[\text{default}]{}\mathrel{\vphantom{\to}^*} $ & $ 
				\langle \left[min\left(20\right)\#1:4,min\left(1\right),min\left(8\right)\right],\{min\left(20\right)\#1:1\} \rangle_{2}$  & The occurrence number is incremented. \\
				$ \xmapsto[\text{default}]{}\mathrel{\vphantom{\to}^*} $ & $ 
				\langle \left[min\left(20\right)\#1:5,min\left(1\right),min\left(8\right)\right],\{min\left(20\right)\#1:1\} \rangle_{2}$  & The occurrence number is incremented. \\
				$ \xmapsto[\text{drop}]{}\mathrel{\vphantom{\to}^*} $ & $ 
				\langle \left[min\left(1\right),min\left(8\right)\right],\{min\left(20\right)\#1:1\} \rangle_{2}$  & The occurrence $5$ does not appear in the program. Thus, $min\left(20\right)\#1:5$ is dropped. \\
				$ \xmapsto[\text{activate}]{}\mathrel{\vphantom{\to}^*} $ & $ 
				\langle \left[min\left(1\right)\#2:1,min\left(8\right)\right],\{min\left(20\right)\#1,min\left(1\right)\#2\} \rangle_{3}$  & The occurrence $5$ does not appear in the program. Thus, $min\left(20\right)\#1:5$ is dropped. \\
					$ \xmapsto[\text{default}]{}\mathrel{\vphantom{\to}^*} $ & $ 
				\langle \left[min\left(1\right)\#2:2,min\left(8\right)\right],\{min\left(20\right)\#1,min\left(1\right)\#2\} \rangle_{3}$  & The occurrence number is incremented. \\
 $ \xmapsto[\text{default}]{}\mathrel{\vphantom{\to}^*} $ & $ 
	\langle \left[min\left(1\right)\#2:3,min\left(8\right)\right],\{min\left(20\right)\#1,min\left(1\right)\#2\} \rangle_{3}$  & The occurrence number is incremented. \\
 $ \xmapsto[\text{default}]{}\mathrel{\vphantom{\to}^*} $ & $ 
	\langle \left[min\left(1\right)\#2:4,min\left(8\right) \right],\{min\left(20\right)\#1,min\left(1\right)\#2\} \rangle_{3}$  & The occurrence number is incremented. \\
	 $ \xmapsto[\text{apply}]{}\mathrel{\vphantom{\to}^*} $ & $ 
	\langle \left[min\left(1\right)\#2:4,min\left(8\right)\right],\{min\left(1\right)\#2\} \rangle_{3}$  & Rule \verb+remove_min+ is applied \\
	$ \xmapsto[\text{default}]{}\mathrel{\vphantom{\to}^*} $ & $ 
	\langle \left[min\left(1\right)\#2:5,min\left(8\right)\right],\{min\left(1\right)\#2\} \rangle_{3}$  & The occurrence number is incremented. \\
		$ \xmapsto[\text{drop}]{}\mathrel{\vphantom{\to}^*} $ & $ 
	\langle \left[min\left(8\right)\right],\{min\left(1\right)\#2\} \rangle_{3}$  & $min\left(1\right)\#1:5$ is dropped since the occurrence $5$ is never used in the program. \\
	$ \xmapsto[\text{activate}]{}\mathrel{\vphantom{\to}^*} $ & $ 
	\langle \left[min\left(8\right)\#3:1\right],\{min\left(1\right)\#2,min\left(8\right)\#3\} \rangle_{4}$  & $min(8)$ is activated. \\
	$ \xmapsto[\text{default}]{}\mathrel{\vphantom{\to}^*} $ & $ 
	\langle \left[min\left(8\right)\#3:2\right],\{min\left(1\right)\#2,min\left(8\right)\#3\} \rangle_{4}$  & The occurrence number is incremented. \\
	$ \xmapsto[\text{default}]{}\mathrel{\vphantom{\to}^*} $ & $ 
	\langle \left[min\left(8\right)\#3:3\right],\{min\left(1\right)\#2,min\left(8\right)\#3\} \rangle_{4}$  & The occurrence number is incremented. \\
		$ \xmapsto[\text{apply}]{}\mathrel{\vphantom{\to}^*} $ & $ 
	\langle \left[\right],\{min\left(1\right)\#2\} \rangle_{4}$  & Rule \verb+remove_min+ is applied. \\
 \hline
\end{tabular}

 \caption{Query $min\left(20\right),min\left(8\right),min\left(1\right)$ running under $w_r$}
 \label{tab:table1}
\end{table}

\section{$CHR^{vis}$: An Animation Extension for CHR}
\label{sec:chrvis}

The proposed extension aims at embedding visualization and animation features into \textsf{CHR} programs. 
The basic idea is that some constraints, the interesting ones, are annotated by visual objects. Thus on adding (removing) such constraints to (from) the constraint store, the corresponding graphical object is added (removed) to (from) the graphical store. 
These constraints are thus treated as interesting events.
Interesting constraints are those constraints that directly represent/affect the basic data structure used along the program. Visualizing such constraints thus leads to a visualization of the execution of the corresponding program. In addition, changes in the constraint store affects the data structure and its visualization. This results in an animation of the execution.
For example, in a program to encode the ``Sudoku'' game, the interesting constraints would be those representing the different cells in the board and their values. Another example is a program solving the $n$-queens problem. The program aims at placing $n$ queens on an $n$ $\times$ $n$ board so that the queens cannot attack each other horizontally, vertically or diagonally.
In this case, the interesting constraints are those reflecting the board and the current/possible locations for each of the $n$ queens \cite{DBLP:journals/corr/SharafAF14,DBLP:conf/lopstr/SharafAF14}.

The approach aims at introducing a generic animation platform independent of the implemented algorithm. This is achieved through two features. First, annotation rules are used. The idea of using interesting events for animating programs was introduced before in Balsa \cite{Brown:1984:SAA:800031.808596} and Zeus \cite{DBLP:conf/vl/Brown91}. Both system use the notion of interesting events. However, users need to know many details to be able to use them.
However, $CHR^{vis}$ eliminated the need of the user knowing any details about the animation.
The second feature is outsourcing the animation into an existing visual tool. For proof of concept, Jawaa \cite{DBLP:conf/sigcse/PiersonR98}, was used. Jawaa provides its users with a wide range of basic structures such as a circle, rectangle, line, textual node , ... etc. Users can also apply actions on Jawaa objects such as movement, changing a color , ... etc.
Users are able to write their own $CHR^{vis}$ programs with the syntax discussed later in this section.
However, users are also provided with an interface (as shown in Figure \ref{fig:ann}) that allows them to specify every interesting event/constraint.
In that case, the programs are automatically generated.
\begin{figure}[!ht]
\centering \includegraphics[width=100mm]{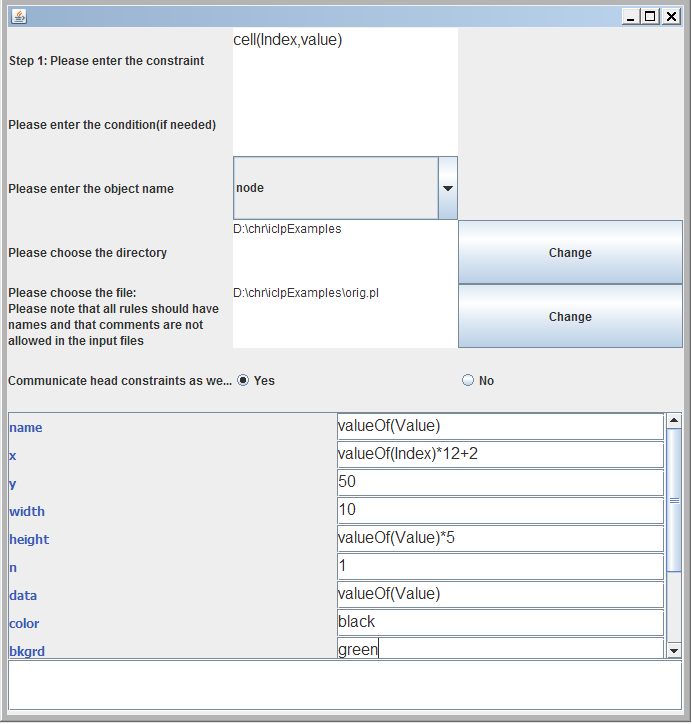}
\caption{Annotating the cell/2 constraint}
\label{fig:ann}
\end{figure}
They are then able to choose the visual object/action (from the list of Jawaa objects/actions) to link the constraint to. Once they make a choice, the panel is populated with the corresponding parameters. Parameters represent the visual properties of the object such as: color, x-coordinate, ... etc. Users have to specify a value for each parameter. A value could be:
\begin{enumerate}
\item a constant value e.g. \verb+100+, \verb+blue+, ... etc.
\item the function \verb+valueOf/1+. valueOf(X) outputs the value of the argument \verb+X+ such that \verb+X+ is one of the arguments of the interesting constraint.
\item the function \verb+prologValue/1+. prologValue(Exp) outputs the value of the argument ``\verb+X+'' computed through the mathematical expression \verb+Exp+.
\item {The keyword} \verb+random+ {that generates a random number}.
\end{enumerate}

\subsection{Extended Programs}
This section introduces the syntax of the \textsf{CHR} programs that are able to produce animations on execution.
In addition to the basic constructs of a \textsf{CHR} program, the extended version needs to specify:
\begin{enumerate}
\item The graphical objects to be used throughout the programs.
\item The interesting constraints and their association with graphical objects.
\item {Whether the head constraints are to be communicated to the tracer.}
\\{In case, the head constraints are communicated, every time a constraint is removed from the store, its} \\{associated visual object is also removed. In some of the cases, this will not be necessary as seen in the below} \\{examples.}
\end{enumerate}

\subsubsection{Syntax of $CHR^{vis}$}
\(\)\\The annotation rules that associate \textsf{CHR} constraint(s) with objects have the following format:
\[
g\;optional\_rule\_name\;H_{vis}\;\Rightarrow\;condition\;|\;graphical\_obj\_name\left(par_{1},par_{2},\ldots,par_{n}\right). 
\]
\(H_{vis}\) contains either one interesting constraint or a group of interesting constraints that are associated with a graphical object. Similar to normal \textsf{CHR} rules, graphical annotation rules could have a pre-condition that has to be satisfied for the rule to be applied. The literal $g$ is added at the beginning of the rule to differentiate between \textsf{CHR} and annotation rules.
A $CHR^{vis}$ program thus has two types of rules. There are the normal \textsf{CHR} rules and the annotation rules responsible for associating \textsf{CHR} constraint(s) with graphical object(s).
Moreover there are meta-annotation rules that associate \textsf{CHR} rules with graphical object(s).
In this case, instead of associating \textsf{CHR} constraint(s) with visual object(s), the association is for a \textsf{CHR} rule. In other words, once such rule is executed the associated visual objects are produced. The association is thus done with the execution of the rule rather than the generation of a new \textsf{CHR} constraint. The rule annotation is done through associating a rule with an auxiliary constraint. The auxiliary constraint has a normal constraint annotation rule with the required visual object.
Such meta-annotation rule has the following format:
\[
g\;optional\_rule\_name\;chr\_rule\_name\;\Rightarrow\;condition\;| \; aux\_constraint\left(par_{1_{aux}},\ldots, par_{m_{aux}}\right). 
\]
\[
g\;aux\_constraint\left(par_{1_{aux}},\ldots, par_{m_{aux}}\right) \;\Rightarrow\ \;graphical\_obj\_name\left(par_{1},par_{2},\ldots,par_{n}\right).
\]

{In addition, a rule for} \verb+comm_head/1+ {has to be added in the} $CHR^{vis}$ {program.}
\\\verb+(comm_head(T) ==> T=true.)+ {means that head constraints are to be communicated to the tracer.}
\\\verb+(comm_head(T) ==> T=false.)+ {means that the removed head constraints should not affect the visualization.}
\subsubsection{Examples}
\(\)\\The program provided in Section \ref{sec:sorting} aims at sorting an array of numbers. In order to animate the execution, we need to visualize the elements of the array. Changes of the elements lead to a change in the visualization and thus animating the algorithm.
The interesting constraint in this case is the \verb+cell+ constraint. As shown in Figure \ref{fig:ann}, it was associated with a rectangular node whose height is a factor of the value of the element. The x-coordinate is a factor of the index. That way, the location and size of a node represent an element of the array.

\fvset{commandchars=\\\{\}}
The new $CHR^{vis}$ program is:
\begin{Verbatim}
:-chr_constraint cell/2.
:-chr_constraint comm_head/1.

{comm_head(T) ==> T=true.}
sort_rule @ cell(In1,V1), cell(In2,V2) <=> In1<In2,V1>V2 | 
                                       cell(In2,V1), cell(In1,V2).
g ann_rule_cell cell(Index,Value) ==> node(valueOf(Value),valueOf(Index)*12+2,
                        50,10,valueOf(Value)*5 ,1,valueOf(Value),
                        black, green, black, RECT).
\end{Verbatim}
Figure \ref{fig:sorting1} shows the result of running the query \verb+cell(0,7),cell(1,6),cell(2,4)+.
As shown from the taken steps, each number added to the array and thus to the constraint store adds a corresponding rectangular node.
Once \verb+cell(0,7)+ and \verb+cell(1,6)+ are added to the constraint store, the rule \verb+sort_rule+ is applicable. Thus, the two constraints are removed from the store. The rule adds \verb+cell(1,7)+ and \verb+cell(0,6)+ to the constraint store.
\begin{figure}[!ht]
\centering
  \subfloat[adding cell(0,7), cell(1,6) to the store]{\hbox{\hspace{2cm}\includegraphics[width=10mm]{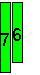}}}
  \subfloat[removing cell(0,7), cell(1,6) from the store]{\hbox{\hspace{2cm}\includegraphics[width=10mm]{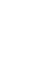}}}
	\hbox{\hspace{2cm}\subfloat[adding cell(1,7), cell(0,6) to the store]{\includegraphics[width=10mm]{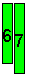}}}
   \hbox{\hspace{2cm}\subfloat[adding cell(2,4) to the store]{\includegraphics[width=10mm]{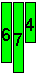}}}
   \hbox{\hspace{2cm}\subfloat[removing cell(0,6) and cell(2,4)  to the store]{\includegraphics[width=10mm]{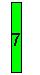}}}
  \hbox{\hspace{2cm}\subfloat[adding cell(2,6)]{\includegraphics[width=10mm]{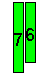}}}
  \hbox{\hspace{2cm}\subfloat[removing cell(1,7),cell(2,6)]{\includegraphics[width=10mm]{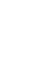}}}
	\hbox{\hspace{2cm}\subfloat[adding cell(2,7),cell(1,6) and cell(0,4)]{\includegraphics[width=10mm]{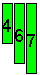}}}
  \caption{Sorting an array of numbers.}
  \label{fig:sorting1}
\end{figure}
Afterwards, \verb+cell(2,4)+ is added to the store. At this point \verb+cell(0,6)+ and \verb+cell(2,4)+ activate \verb+sort_rule+ and are removed from the constraint store. The rule first adds \verb+cell(2,6)+ to the store. At this point \verb+cell(1,7)+ and \verb+cell(2,6)+ activate \verb+sort_rule+ again. Thus they are both removed from the store. The constraints \verb+cell(2,7), cell(1,6)+ are added. Afterwards, the last constraint \verb+cell(0,4)+ is added to the store.
As seen from Figure \ref{fig:sorting1}, using annotations for constraints has helped animate the execution of the sorting algorithm. However, in some of the steps, it might not have been clear which two numbers are being swapped.
In that case it would be useful to use an annotation for the rule \verb+sort_rule+ instead of only annotating the constraint \verb+list+. The resulting program looks as follows:
\begin{Verbatim}
:-chr_constraint cell/2.
:-chr_constraint comm_head/1.

{comm_head(T) ==> T=false.}
sort_rule @ cell(In1,V1), cell(In2,V2) <=> In1<In2,V1>V2 | 
                    cell(In2,V1), cell(In1,V2), swap(In1,V1,In2,V2).
g ann_rule_cell cell(Index,Value) ==> node(nodevalueOf(Value),valueOf(Index)*12+2,50,10, 
                          valueOf(Value)*5 , 1, valueOf(Value), black, 
                          green, black, RECT).
g swap(In1,V1,In2,V2) ==> changeParam(nodevalueOf(V1),bkgrd,pink)
g swap(In1,V1,In2,V2) ==> changeParam(nodevalueOf(V2),bkgrd,pink)
g swap(In1,V1,In2,V2) ==> moveRelative(nodevalueOf(V1),
                          (valueOf(I2)-valueOf(I1))*12,0)
g swap(In1,V1,In2,V2) ==> moveRelative(nodevalueOf(V2),
                          (valueOf(I2)-valueOf(I1))*(-12),0)
g swap(In1,V1,In2,V2) ==> changeParam(nodevalueOf(V1),bkgrd,green)
g swap(In1,V1,In2,V2) ==> changeParam(nodevalueOf(V2),bkgrd,green)
													
g 	sort_rule ==> 	swap(In1,V1,In2,V2).											
\end{Verbatim}
The annotations make sure that once two numbers are swapped, they are first marked with a different color (pink in this case).
The two rectangular bars are then moved. The bar on the left is moved to the right. The bar on the right is moved to the left (negative displacement). The space between the start of one node and the start of the next node is 12 pixels. Thus the displacement is calculated as the difference between the two indeces multiplied by 12. After the swap is done, the two bars are colored back into green.
The result of executing the query: \verb+cell(0,7),cell(1,6),cell(2,4)+ is shown in Figure \ref{fig:sorting2}.
\begin{figure}[!ht]
\centering
	\subfloat[after adding cell(0,7), cell(1,6) to the store, they are marked to be swapped]{\hbox{\hspace{2cm}\includegraphics[width=10mm]{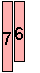}}}
  \subfloat[swapping 7 and 6]{\hbox{\hspace{2cm}\includegraphics[width=10mm]{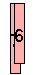}}}
  \subfloat[7 and 6 are swapped]{\hbox{\hspace{2cm}\includegraphics[width=10mm]{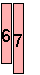}}}
	\hbox{\hspace{2cm}\subfloat[cell(2,4) is added]{\includegraphics[width=10mm]{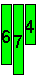}}}
   \hbox{\hspace{2cm}\subfloat[6 and 4 are marked to be swapped]{\includegraphics[width=10mm]{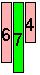}}}
   \hbox{\hspace{2cm}\subfloat[swapping 4 and 6]{\includegraphics[width=10mm]{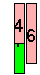}}}
  \hbox{\hspace{2cm}\subfloat[7 and 6 are marked to be swapped]{\includegraphics[width=10mm]{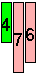}}}
  \hbox{\hspace{2cm}\subfloat[swapping 7 and 6]{\includegraphics[width=10mm]{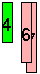}}}
	\hbox{\hspace{2cm}\subfloat[final sorted list]{\includegraphics[width=10mm]{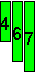}}}
  \caption{Sorting an array of numbers through a rule annotation.}
  \label{fig:sorting2}
\end{figure}

\subsection{Animation Formalization}
\label{sec:form}
The rest of the section offers a formalization of the animation to be able to run $CHR^{vis}$ programs and reason about their correctness. The basic idea is introducing a new ``graphical" store.
$CHR^{vis}$ adds, besides the classical constraint store of \textsf{CHR}, a new store called the graphical store. As implied by the name, the graphical store contains graphical/visual objects. Such objects are the visual mappings of the interesting constraints. 
Over the course of the program execution, and as a result of applying the different rules, the constraint store as well as the graphical store change. As introduced before, the change of the visual objects leads to an animation of the program.

\begin{definition}
In $CHR^{vis}$, a state is represented by a tuple $\langle G,S,Gr,B,T,H\_ann \rangle_{n}$. $G$, $S$, $B$, $T$, and $n$ have the same meanings as in a normal \textsf{CHR} state (goal store, \textsf{CHR} constraint store, built-in store, propagation history and the next available identification number). $Gr$ is a store of graphical objects. 
$H\_ann$ is the history of the applications of the visual annotation rules.
Each element in $H\_ann$ has the following format: $\langle rule\_name, Head\_ids, Object\_ids\rangle$ where
\begin{itemize}
    \item $rule\_name$ represents the name of the fired annotation rule.
    \item $Head\_ids$ contain the ids of the head constraints that fired the annotation rule.
    \item $Object\_ids$ are the ids of the graphical objects added to the graphical store through firing $rule\_name$ using $Head\_ids$.
\end{itemize}

\end{definition}

\begin{definition}
For a sequence $Sq=\left(c_1\#id_1,\ldots,c_n\#id_n\right)$, the function $get\_constraints\left(Sq\right)=\left(c_1 \ldots,c_n\right)$.
\end{definition}

\begin{definition}
\label{def:seqequiv}
Two sequences $A$ and $B$ are equivalent: $A \;\doteq\;B$ if 
\begin{enumerate}
\item For every $X$, if $X$ exists $N$ times in $A$ such that $N>0$, then $X$ exists $N$ times in $B$.
\item For every $Y$, if $Y$ exists $N$ times in $B$ such that $N>0$, then $Y$ exists $N$ times in $A$.
\end{enumerate}
\end{definition}

\begin{definition}
\(\)\\
{{The function} $output\_graphical\_object \left(c\left(Arg_0,\ldots,Arg_n\right), \{Arg'_0,\ldots,Arg'_n\}, 
output\left(Object,OArg_0,\ldots,
OArg_k \right) \right) \\=graphical\_object\left(Actual_0,\ldots,Actual_k\right)$ such that:}
\begin{itemize}
	\item $graphical\_object = Object$.
	\item Each parameter $Actual_n=get\_actual\left(OArg_n\right)$ such that
		\begin{itemize}
			\item if $OArg_n$ is a constant value then $get\_actual\left(OArg_n\right)=OArg_n$.
			\item if {$OArg_n=valueOf\left(Arg_m\right)$ then $get\_actual\left(OArg_n\right)=(Arg'_n)$.}
			\item if $OArg_n=prologValue\left(Expr\right)$ then $get\_actual\left(OArg_n\right)=X$ where $Expr$ is evaluated in SWI-Prolog and binds the variable $X$ to a value.
			\item if  $OArg_n=random$ , then $get\_actual\left(OArg_n\right)$ is a randomly computed number.
		\end{itemize}
\end{itemize}
\end{definition}

\begin{definition}
\(\)\\The function $update\_graphical\_store\left(\{Obj_1\#id_1,\ldots,Obj_i\#id_i\}, graphical\_action\left(Actual_0,\ldots,Actual_k\right)\right)=$
\[ Each\;Obj_i' =
  \begin{cases}
  Obj_i'    & \quad \text{if } Obj_i \text{ is affected by }graphical\_action\left(Actual_0,\ldots,Actual_k\right)\\
     Obj_i   & \quad \text{if } Obj_i \text{ is not affected by }graphical\_action\left(Actual_0,\ldots,Actual_k\right)\\
  \end{cases}
\]
Any $Obj_i'$ could have a different graphical aspect such as its color, x-coordinate, ... etc.
\end{definition}

\begin{definition}
\(\)\\{The function} ${generate\_new\_ann\_history\left(Graph\_obj,Obj\_id,rule\_name,Head\_id,H\_ann\right)=H'\_ann}$ such that:
 in the case where ${\langle rule\_name,Head\_id,Objects\_ids \rangle \in H\_ann}$, 
    \\${H'\_ann=H\_ann - \langle rule\_name,Head\_id,Objects\_ids \rangle \cup \langle rule\_name,Head\_id,Objects\_ids \cup Obj\_id \rangle }$,
\end{definition}

\begin{definition}
\(\)\\{The function} ${remove\_gr\_obj\left(G\_store,rem\_head\_id,H\_ann\right)=G'\_store}$ such that:
 in the case where 
 \\${\langle rule\_name,head\_ids,Objects\_ids \rangle \in H\_ann} \wedge rem\_head\_id \subseteq head\_ids$, 
    \\${G\_store=G\_store - \cup_{i} \left(Obj_{i} \;where \;Obj_{i} \in Objects\_ids\right)}
    $.
   
\end{definition}

\begin{definition}
\label{defcont}
\(\)\\The function $contains\left(H\_ann,\langle rule, Head_{ids}\rangle\right)$ is:
\begin{itemize}
    \item $true$ in the case where $H\_ann$ contains a tuple of the form $\langle rule, Head_{ids}, Objects \rangle$.
    \item $false$ in the case where $H\_ann$ does not contain a tuple of the form $\langle rule, Head_{ids}, Objects \rangle$.
\end{itemize}
\end{definition}
\begin{longtable}{l} 
    \hline\\
			1. \textbf{Solve+wakeup}
			: $\langle \left[c|A\right],  S_{0}\uplus S_{1},Gr,B,T,H\_ann\rangle_{n} 
			\mapsto_{solve+wake}
			 \langle S_{1}++A, S_{0}\uplus S_{1},Gr,B',T,H\_ann\rangle_{n}$
			\\given that $c$ is a built-in constraint and $\mycal{CT} \models \forall\left(\left(c\wedge B \leftrightarrow B' \right)\right)$\\
			and $wakeup\left(S_{0}\uplus S_{1} ,c,B\right) = S_{1}$ \\
    \\\hline\\
			2. \textbf{Activate}
		$\langle \left[c|A\right],  S,Gr,B,T,H\_ann\rangle_{n} 
			\mapsto_{activate}
			\langle \left[c\#n:1|A\right],  \{c\#n\} \cup S,Gr,B,T,H\_ann\rangle_{n+1}
			$
		\\given that $c$ is a \textsf{CHR} constraint.\\
		 \\\hline\\
		 	3. \textbf{Reactivate}
		$\langle \left[c\#i|A\right],  S,Gr,B,T,H\_ann\rangle_{n} 
			\mapsto_{reactivate}
			\langle \left[c\#i:1|A\right],   S,Gr,B,T,H\_ann\rangle_{n}
			$
		\\given that $c$ is a \textsf{CHR} constraint.
		\\
		 \\\hline\\
			4. \textbf{Draw}
		: $\langle \left[{\langle Obj\# 
		\langle r, id\left(H\right),Obj\_ids\rangle}|A\right], S,Gr,B,T,H\_ann\rangle_{n} \mapsto_{draw} \langle A,S ,Gr\cup {\{Obj\#n\}},B,T,H\_ann'\rangle_{n+1}$ \\given that {$Obj$} is a graphical object: $graphical\_object\left(Actual_0,\ldots,Actual_k\right)$. 
		\\and${H\_ann'=generate\_new\_ann\_history\left(Obj,n,r,id\left(H\right),H\_ann\right)}$
		\\The actual parameters of $graphical\_object$ are used to visually render the object.\\
		 \\\hline\\
		 	5. \textbf{Update Store}
		: $\langle \left[{\langle Obj\# 
		\langle r, id\left(H\right),Obj\_ids\rangle}|A\right], S,Gr,B,T,H\_ann\rangle_{n} \mapsto_{update store} \langle A,S ,Gr',B,T,H\_ann  \rangle_{n}$ \\given that {$Obj$} is a graphical action: $graphical\_action\left(Actual_0,\ldots,Actual_k\right)$. 
		\\$Gr'=update\_graphical\_store\left(Gr, graphical\_action\left(Actual_0,\ldots,Actual_k\right)\right)$
		\\The actual parameters of $graphical\_action$ are used to update the graphical objects.\\
		 \\\hline\\
				6. \textbf{{Apply\_Annotation}}:
		\\$\langle \left[c\#i:j|G\right],H \uplus S,Gr,B,T,H\_ann\rangle_{n} \mapsto_{apply\_annotation}$ \\$\langle {\left[Obj\# 
		\langle r, id\left(H\right),\{\;\}\rangle,
		c\#i:j|G\right],} 
		H \cup S ,Gr ,
		B,$ 
		$T , H\_ann \cup \{ \langle r, id\left(H\right),\{\} \rangle \} \rangle_{n}$ 
		where there is: 
		 \\a renamed, constraint annotation rule {with variables $y'$} of the form:
		$g\; r\; @\; H'\;$ 
		$==> Condition\; |\; Obj'$ 
		\\where $c$ is part of $H'$ and
		\\$\mycal(CT) \models \exists \left(B\right) 
		\wedge \forall \left(B \implies \exists y'\left(chr\left(H\right)=(H') \wedge Condition \wedge output\_graphical\_object \left(H',y', Obj'\right)=Obj\right)  \right)$ \\and $\neg \left(contains\left(H\_ann,\left(r, id\left(H\right)\right)\right)\right)$ 
		
		\\
	\\\hline\\
				7. \textbf{{Apply}}
		: $\langle \left[c\#i:j|G\right], H_{k}\uplus H_{r}\uplus S,Gr,B,T,H\_ann\rangle_{n} \mapsto_{apply}$
		\\$\langle C ++ H ++ G , H_{k}\cup S ,Gr',$
		{$chr\left(H_{k}\right)=(H'_{k})\wedge chr\left(H_{r}\right)=(H'_{r}) 
		 \wedge g \wedge B$}
		\\$T \cup \{ \langle r,id\left(H_{k}\right) + id\left(H_{r}\right) \rangle \},H\_ann\rangle_{n}$ 
		where:
		\\\tabitem there is no applicable constraint annotation rule that involves $c$. 
		\\(i.e. every applicable rule has already been applied). \\In other words, for renamed-apart every annotation rule with variables $y'$:  
		\\$g\; r\; @\; H'\; ==>\; Cond\; |\; Obj'$,
		\\
		{$c$ is part of $H'$}
		{$\mycal(CT) \models \exists \left(B\right) \wedge \forall \left(B \implies \exists y'  (chr\left(H\right)=(H') \wedge Condition)\right)$ }
		\\It is already the case that: 
		$\left(contains\left(H\_ann,\left(r, id\left(H\right)\right)\right)\right)=true$
		
		\\\tabitem 
		There is a renamed rule in $P_{vis}$ with the form\\
		$r\; @\; H'_{k}\;\backslash \; H'_{r} \Leftrightarrow \; g \; | \; C.$\\ with variables $x'$
		and the jth occurrence of $c$ is part of the head of the renamed rule,
		such that 
		\\where \( \mycal{CT} \models \exists \left(B\right) \wedge \forall(B \implies \exists x'\left((chr\left(H_k\right)=(H'_k) \wedge chr\left(H_r\right)=(H'_r)\wedge g)\right)\) 
		\\and \(\langle r,id\left(H_{k}\right) + id\left(H_{r}\right) \rangle\notin T\).
		\\If $c$ occurs in $H'_k$ then $H=\left[c\#i:j\right]$ otherwise $H=\left[\right]$.
		\\If the program communicates the head constraints (i.e. contains \verb+comm_head(T) ==> T=true+) then
		\\${Gr'= remove\_gr\_obj\left(G,id\left(H'_{r}\right),H\_ann\right)}$
		\\\hline\\
8. \textbf{Drop}
		\\
		$\langle \left[c\#i:j|A\right],  S,Gr,B,T,H\_ann\rangle_{n} 
			\mapsto_{drop}
		\langle A,  S,Gr,B,T,H\_ann\rangle_{n} 
			$
		\\given that $c\#i:j$ is an occurrenced active constraint and $c$ has no occurrence $j$ in the program 
		\\and that there is no applicable constraint annotation rule for the constraint $c$.
		\\That could thus imply that all existing ones were tried before.
		\\
		 \\\hline\\
		 8. \textbf{Default}
		$\langle \left[c\#i:j|A\right],  S,Gr,B,T,H\_ann\rangle_{n} 
			\mapsto_{default}
		\langle \left[c\#i:j+1|A\right],  S,Gr,B,T,H\_ann\rangle_{n} 
			$
		\\
		in case there is no other applicable transition.
		\\
		\\\hline
\caption{Transitions of \(\omega_{vis}\)}
	\label{table:omegavis}
\end{longtable}

	Table \ref{table:omegavis} shows the basic transitions of $\omega_{vis}$. To make the transitions easier to follow, table \ref{table:omegavis}  shows the transitions needed to run \textsf{CHR} programs with constraint annotation rules. Annotations of \textsf{CHR} rules are thus discarded from the first set of transitions.
	$\omega_{vis}$ allows for running programs that contain constraint annotations.
The three transitions $apply\_annotation$ and $draw$ are responsible for dealing with the graphical store and its constituents. The transition, $apply\_annotation$, applies a constraint annotation rule. The rest of the transitions, such as solve, introduce and apply, have the same behavior as in $\omega_{r}$. 
These transitions do not affect the graphical store or the application history of the annotation rules.

	The transitions affecting the graphical store are:
	\begin{enumerate}
	\item \textbf{Draw}
	\\The new transition \emph{draw} adds a graphical object ($g$) to the graphical store. Since multiple copies of a graphical object are allowed, each object is associated with a unique identifier.
\item \textbf{Apply\_Annotation}
	\\The \emph{Apply\_Annotation} transition applies a constraint annotation rule (ann\_rule). An annotation rule is applicable if the \textsf{CHR} constraint store contains matching constraints. The condition of the rule has to be implied by the built in store under the matching. The built in constraint store B is also first checked for satisfiability. For the rule to be applied, it should not have appeared in the history of applied annotation rules with the same constraints i.e. it should be the first time the constraint(s) fire this annotation rule. Executing the rule adds to the goal the graphical object in the body of the executed annotation rule. The history of annotation rules is updated accordingly with the name of the rule in addition to the id(s) of the \textsf{CHR} constraint(s) in the head.	
	In fact, this transition has a higher precedence than the transition $apply$. Thus in the case where an annotation rule and a \textsf{CHR} rule are applicable, the annotation rule is triggered first.
	The precedence makes sure that graphical objects are added in the intended order. This ensures producing correct animations.
		\end{enumerate}

\begin{definition}[Built-In Store Equivalence]
\label{builtinequiv}
{
\(\)\\Two built-in constraint stores $B_{1}$ and $B_{2}$ are considered equivalent iff:
\\$\mycal(CT) \models \forall(\exists_{y1}(B_{1}) \leftrightarrow \exists_{y2}(B_{2}))$
where $y_1$ and $y_2$ are the local variables inside $B_1$ and $B_2$ respectively.
The equivalence thus basically ensures that there are no contradictions in the substitutions since local variables are renamed apart in every \textsf{CHR} program. The equivalence check thus ensures the logical equivalence rather than the syntactical equivalence.
}
\end{definition}

	\begin{definition}
\label{def:eqchrvis}
A $CHR^{vis}$ state $St_{vis}=\langle G_{vis},S_{vis},Gr_{vis},B_{vis},T_{vis},T_{visAnn} \rangle_{n_{vis}}$ is equivalent to a \textsf{CHR} state $St=\langle G,S,B,T \rangle_{n}$ if and only if 
\begin{enumerate}
	\item $get\_constraints\left(G_{vis}\right) \doteq get\_constraints\left(G\right)$ according to Definition \ref{def:seqequiv}.
	\item $get\_constraints\left(St_{vis}\right) \doteq get\_constraints\left(S\right)=C$ according to Definition \ref{def:seqequiv}.
	\item {$B_{vis}$ and $B$ are equivalent according to Definition} \ref{builtinequiv}.
	\item $T_{vis} = T$
	\item $n_{vis} \geq n$
\end{enumerate}

The idea is that a $CHR^{vis}$ state basically has an extra graphical store. The correspondence check is effectively done through the \textsf{CHR} constraints since they are the most distinguishing constituents of a state. Thus, the constraint store and the stack should contain the same constraints. The propagation history should be also the same indicating that the same \textsf{CHR} rules have been applied. $n_{vis}$ could, however, have a value higher than $n$. This is due to the fact that graphical objects have identifiers.

{The definition of state equivalence described here follows the properties introduced in} \cite{raiserequiv}. {However, it is stricter.
}
\end{definition}

\begin{theorem}[Completeness]
\label{proof:pr1}
Given a \textsf{CHR} program $P$ (running under $\omega_{r}$) along with its user defined annotations and its corresponding $P_{CHR^{vis}}$ (running under $\omega_{vis}$) program,
for the same query $Q$, every derived state $S_{chr}$: $Q \mapsto_{\omega_{r}}^{*} S_{chr}$ has an equivalent state $S_{chr_{vis}}$: $Q \mapsto_{\omega_{vis}}^{*} S_{chr_{vis}}$.
\begin{proof}
\label{pr:pro1}
\(\)\\Base Case: For a given query $Q$, the initial state in $\omega_{r}$ $S_{chr}= \langle Q,\{\},\{\},\{\} \rangle_1$. The initial state in $\omega_{vis}$ is $S_{chr_{vis}}=\langle Q,\{\},\{\},\{\},\{\},\{\} \rangle_1$.
\footnote{Throughout the different proofs, identifiers are omitted for brevity} According to Definition \ref{def:eqchrvis} $S_{chr}$ and $S_{chr_{vis}}$ are equivalent.\\ 
\\Induction Hypothesis:
Suppose that there are two equivalent derived states $S_{chr}=\langle A,S,B,T \rangle_n$ and $S_{chr_{vis}}=\langle A,S,Gr,B,T,H\_ann \rangle_m$ such that $Q \mapsto_{\omega_{r}}^{i} S_{chr}$ and $Q \mapsto_{\omega_{vis}}^{j} S_{chr_{vis}}$.\\ 
\\\emph{Induction Step}:
According to the induction hypothesis, $S_{chr}$ and $S_{chr_{vis}}$ are equivalent.
The rest of the proof shows that any transition applicable to $S_{chr}$ in $\omega_{r}$ produces a state that has an equivalent state produced by applying a transition to $S_{chr_{vis}}$ in $\omega_{vis}$.
Thus, no matter how many times the step is repeated, the output states are equivalent.
\begin{itemize}
	\item \textbf{Case 1: (Applying Solve+wakeup):}
	\\In this case, $S_{chr}\mapsto S_{chr}'$ such that:
	\\$S_{chr}: \langle \left[c|A\right],  S_{0}\uplus S_{1},B,T\rangle_{n} 
			\mapsto_{solve+wake}
			 \langle S_{1}++A, S_{0}\uplus S_{1},B',T\rangle_{n}$
			\\Transition \emph{solve+wakeup} is applicable if:
			\begin{enumerate}
				\item $c$ is a built-in constraint
				\item $\mycal{CT} \models \forall((c\wedge B \leftrightarrow B'))$
				\item $wakeup\left(S_{0}\uplus S_{1} ,c,B\right) = S_{1}$
			\end{enumerate}
			$S_{chr_{vis}}(\langle Stack,  S_{chr_{vis}},Gr,B_{vis},T_{vis},T_{ann}\rangle_{m})$ is equivalent to $S_{chr}(\langle \left[c|A\right],  S_{0}\uplus S_{1},B,T\rangle_{n})$. Thus according to Definition \ref{def:eqchrvis},
				 $Stack = \left[c|A\right]$
				$\wedge$ $S_{chr_{vis}}=S_{0}\uplus S_{1}$
				$\wedge$ $B_{vis}=B$
				$\wedge$ $T_{vis}=T$
				$\wedge m \geq n$
			Thus accordingly, the transition $solve+wakeup$ is applicable to $S_{chr_{vis}}$ under $\omega_{vis}$ producing $S_{chr_{vis}}'$:$\langle S_{1}++A, S_{0}\uplus S_{1},Gr,B\wedge c,T,H\_ann\rangle_{m}$. According to Definition \ref{def:eqchrvis}, $S'_{vis}$ is equivalent to $S_{chr}'$\\
			\item \textbf{Case 2: (Applying Activate):}
			\\In this case, $S_{chr}=\langle \left[c|A\right],  S,B,T\rangle_{n}$ where $c$ is a \textsf{CHR} constraint. Thus $S_{chr}\mapsto_{activate} S_{chr}': \langle \left[c\#n:1|A\right],  {c\#n} \cup S,B,T\rangle_{n+1}$.\\	Since $S_{chr_{vis}}(\langle Stack,  S_{chr_{vis}},Gr,B_{vis},T_{vis},T_{ann}\rangle_{m})$ is equivalent to $S_{chr}(\langle \left[c|A\right],  S_{0}\uplus S_{1},B,T\rangle_{n})$. Thus according to Definition \ref{def:eqchrvis}:
				 $Stack = \left[c|A\right]$
				$\wedge$ $S_{chr_{vis}}=S$
				$\wedge$ $B_{vis}=B$
				$\wedge$ $T_{vis}=T$
				$\wedge$ $m \geq n$\\
				Accordingly, $S_{chr_{vis}} \mapsto_{activate} S_{chr_{vis}}':\langle \left[c\#n:1|A\right],  \{c\#n\} \cup S,Gr,B,T,T_{ann}\rangle_{m+1}$ which is equivalent to $S_{chr}'$. (Since $m \geq n$, then $m+1 \geq n+1$).\\
				\item \textbf{Case 3 (Applying Reactivate):}
				\\The transition \emph{reactivate} is applicable if the stack has on top of it an element of the form $c\#i$ where $c$ is a \textsf{CHR} constraint. In this case $S_{chr}
				=\langle \left[c\#i|A\right],  S,B,T\rangle_{n} $. Accordingly, $S_{chr}
			\mapsto_{reactivate}
			S_{chr}':\langle \left[c\#i:1|A\right],   S,B,T\rangle_{n}
			$. Since $S_{chr_{vis}}$ and $S_{chr}$ are equivalent, then $S_{chr_{vis}}$ has the same stack.  $S_{chr_{vis}}=\langle \left[c\#i|A\right],  S,Gr,B,T,T_{ann}\rangle_{m}$ triggers the transition reactivate producing $S_{chr_{vis}}':\langle [c\#i:1|A],  S,Gr,B,T,T_{ann}\rangle_{m}$ which is also equivalent to $S_{chr}'$. 
			{Since $c$ is not associated with an occurrence yet, no annotation rule is applicable at this point.}
		\\
		\item \textbf{Case 4: (Applying the transition Apply)}
		\\The transition \emph{Apply} is triggered under $\omega_{r}$ in the case where $S_{chr}=\langle \left[c\#i:j|A\right], H_1 \uplus H_2 \uplus S,B,T\rangle_{n} $
			such that the jth occurrence of $c$ is part of the head of the re-named apart rule {with variables $x'$}:
		{\(r\; @\; H'_1\; \backslash \;H'_2 \; \Leftrightarrow \;g \;|\; C.\)}
		\\
		such that:
		\\{$\mycal{CT} \models \exists(B) \wedge \forall
		(B \implies \exists x'( chr\left(H_{1}\right)=(H'_{1})\wedge chr\left(H_{2}\right)=(H'_{2})
		\wedge g)))$ }
		and $\langle r,id\left(H_1\right)+id\left(H_2\right) \rangle \notin T$.
		\\
	{Thus in such a case $S_{chr} \mapsto_{apply\;r}$
			$S_{chr}':\langle C ++ H ++ A,  H_1 \cup S,$}
			{$chr\left(H_{1}\right)=(H'_{1})\wedge chr\left(H_{2}\right)=(H'_{2}) \wedge g \wedge$} $B,
			T \cup \{\langle r,id\left(H_1\right)+id\left(H_2\right)\rangle\}\rangle_{n}$
			
		 \[ H =
  \begin{cases}
   \left [c\#i:j\right]      & \quad \text{if } c \text{ occurs in }H'_1\\
    \left[\;\right] & \quad \text{otherwise}\\
  \end{cases}
\]
		\\Due to the fact that $S_{chr}$ and $S_{chr_{vis}}$ are equivalent, in the case where $S_{chr}$ triggers the transition \emph{Apply} under $\omega_{r}$, the same rule is also applicable under $\omega_{vis}$ to $S_{chr_{vis}}$.
		However for $S_{chr_{vis}}$, one of two possibilities could happen:
		\\\tabitem There is no applicable constraint annotation rule:
		\\{This could be due to the fact that any applicable annotation rule was already executed or that there re no applicable} {annotation rules at this point.}
		In this case,the transition $Apply$ is triggered right away under $\omega_{vis}$ producing a state
		\\
		{
		($S_{chr_{vis}}':\langle C ++ H ++ A,  H_1 \cup S,Gr,
			chr\left(H_1\right) = H'_1 \wedge chr\left(H_2\right) = H'_2 \wedge g \wedge B,$
			\\$T \cup \{\langle r,id\left(H_1\right)+id\left(H_2\right)\rangle, H\_ann\}\rangle_{m}$) equivalent to ($S_{chr}'$).
		}
			{The original states are equivalent and the same rule is applied in both cases. We can assume that,}
			{without loss of generality}
			{, in the $chr_{vis}$ program, the rule is renamed using the same variables $x'$ resulting in the same matching. This is because the same matching should happen to be able to apply the same rule using the given constraint stores}.
		\\\tabitem There is an applicable annotation rule:
		\\In this case an annotation rule ($r_{ann}$) for $c$ is applicable such that:
		\\$S_{chr_{vis}}\langle \left[c\#i:j|A\right],H_1 \uplus H_2 \uplus S,Gr,B,T,H\_ann\rangle_{m} \mapsto_{apply\_annotation}$ 
		\\$S_{chr_{vis}}':\langle \left[Obj\#\langle  r,id\left(H\right),\{\} \rangle,c\#i:j|A\right],H_1 \uplus H_2 \uplus S ,Gr , B,$ 
		$T , H\_ann \cup \{\langle r_{ann}, id\left(H\right),\{\;\}\rangle\} \rangle_{m}$ 
		according to the previously mentioned conditions.
		\\At this point either the transition \emph{Draw} or \emph{Update store} is applicable such that:
		\\$S_{chr_{vis}}' \mapsto_{draw\big/update store} S_{chr_{vis}}'':\langle \left[c\#i:j|A\right], H_1 \uplus H_2 \uplus S ,Gr', B,$ 
		$T , H'\_ann \rangle_{m'}$
		\\In case $Obj$ is a graphical object, the transition \emph{Draw} is applied such that:
		$Gr'=Gr\cup\{Obj\#m\}\;\wedge\;m'=m+1 \;\wedge\; H'\_ann = generate\_new\_ann\_history\left(Obj,m,r,id\left(H\right),\cup \{\langle r_{ann}, id\left(H\right),\{\;\}\rangle\}\right)$.
		\\In case, $Obj$ is a graphical action, the transition \emph{Update Store} is applied such that:
		\\$Gr'=update\_graphical\_store\left(Gr,Obj\right)\;m'=m\;H'\_ann=\cup \{\langle r_{ann}, id\left(H\right),\{\;\}\rangle\}$
		\\Since the two transitions, could only change the graphical stores, annotation history and the next available identifier, the equivalence of the states is not affected.
		\\At this point $\omega_{vis}$ fires the transition $Apply$ for the same \textsf{CHR} rule that triggered the same transition under $\omega_{r}$ earlier. The produced state $S_{chr_{vis}}'''$ has the format: 
		\\{$\langle C ++ H ++ A,  H_1 \cup S, Gr',
			chr\left(H_1\right) = H'_1 \wedge chr\left(H_2\right) = H'_2 \wedge B,
			T \cup \{\langle r,id\left(H_1\right)+id\left(H_2\right)\rangle\},$}
			\\{$H'\_ann\rangle_{m'}$.
			}
			{Similarly the same matching (local variable renaming $x'$) has to be applied for the rule to fire}.
			\\Consequently, according to Definition \ref{def:eqchrvis}, the state $S_{chr_{vis}}'''$ is still equivalent to $S_{chr}'$
			\\
			\item \textbf{Case 5: Applying the transition Drop}
			\\In the case where the top of the stack has an occurrenced active constraint $c\#i:j$ such that $c$ has no occurrence $j$ in the program, the transition drop is applied.
			Thus, $S_{chr}:\langle\left[c\#i:j|A\right],  S,B,T\rangle_{n} 
			\mapsto_{drop}
		S_{chr}': \langle A,  S,B,T\rangle_{n} 
			$
			\\Since $S_{chr_{vis}}$ and $S_{chr}$ are equivalent, the stack of both states have to be equivalent. 
		\\Thus $S_{chr_{vis}}=\langle \left[c\#i:j|A\right],  S,Gr,B,T,H\_ann\rangle_{m}$.
			For $\omega_{vis}$ one of two possibilities is applicable:
			\begin{enumerate}
			    \item No annotation rule is applicable.
			    This could be either because $c$ is not associated with any visual annotation rules or because all such rules have been already applied.
			    In this case \\$S_{chr_{vis}}:\langle\left[c\#i:j|A\right],  S,Gr,B,T,H\_ann\rangle_{m} \mapsto_{drop} S_{CHR_{vis}}':\langle A,  S,Gr,B,T,H\_ann\rangle_{m}$
			    \item The second possibility is the existence of an applicable annotation rule: transforming $S_{chr_{vis}}$ to $S_{chr_{vis}}':\langle\left[Obj\#\langle r,id\left(H\right),\{\;\}\rangle,c\#i:j|A\right],  S,Gr,B,T,H'\_ann\rangle_{m}$.
			    At that point either \emph{draw} or \emph{update store} are to be applied transforming $S_{chr_{vis}}'$ to $S_{chr_{vis}}'':\langle \left[c\#i:j|A\right]$\\$,  S,Gr',B,T,H''\_ann\rangle_{m'}$.
			    At that point, the transition drop is applicable converting $S_{chr_{vis}}''$ to $S_{chr_{vis}}''':
			    \\\langle A,  S,Gr',B,T,H''\_ann\rangle_{m'}$. $S_{chr_{vis}}'''$ is equivalent to $S_{chr}'$
			\end{enumerate}
			\(\)\\
			\item \textbf{Case 6: Applying the Default Transition}
		\\If none of the previous cases is applicable, $S_{chr}:\langle \left[c\#i:j|A\right],  S,B,T\rangle_{n} 
			\mapsto_{drop}
		S_{chr}':\langle \left[c\#i:j+1|A\right],  S,B,T\rangle_{n}.
			$
		\\For the equivalent $S_{chr_{vis}}$, one of two possible cases could happen:
		\begin{enumerate}
			\item \textbf{Apply annotation is not applicable:}
			\\In that case, the \emph{Default} transition is directly applied transforming $S_{chr_{vis}} to S_{chr_{vis}}'$ such that \\$\langle \left[c\#i:j|A\right],  S,Gr,B,T,H\_ann\rangle_{m} \mapsto_{drop} \langle A,  S,Gr,B,T,H\_ann\rangle_{m}$.\\The produced state ($S_{chr_{vis}}'$) is equivalent to $S_{chr}'$ as well.
			\item \textbf{Apply annotation is applicable:}
			\\In this case an annotation rule for one of the existing constraints is applicable such that:
		\\$S_{chr_{vis}}\langle \left[c\#i:j|A\right], S,Gr,B,T,H\_ann\rangle_{m} \mapsto_{apply\_annotation}$ 
		\\$S_{chr_{vis}}':\langle \left[Obj\#\langle r,id(H), \{\;\}\rangle ,c\# i:j|A\right], S ,Gr , B,$ 
		$T , H'\_ann \rangle_{m}$ 
		according to the previously mentioned conditions.
		\\At this point either the transition \emph{Draw} or the transition \emph{Update store} is applicable such that:
		\\$S_{chr_{vis}}' \mapsto_{draw} S_{chr_{vis}}'':\langle \left[c\#i:j|A\right], S ,Gr' , B,$ 
		$T , H''\_ann \rangle_{m'}$
		\\$S_{chr_{vis}}''$ is still equivalent to $S_{chr}$.
		\\At the point where the transition \emph{apply\_annotation} is no longer applicable,  the only applicable transition is \emph{Default} transforming $S_{chr_{vis}}''$ to $S_{chr_{vis}}'''$ such that $S_{chr_{vis}}'''=\langle A,  S ,Gr' , B,$ 
		$T , H''\_ann \rangle_{m'}$. According to Definition \ref{def:eqchrvis}, $S_{chr_{vis}}'''$ is equivalent to $S_{chr}'$
		\end{enumerate}
		\(\)\\
		Thus in all cases an equivalent state is produced under $\omega_{vis}$
		\qed
\end{itemize}

\end{proof}
\end{theorem}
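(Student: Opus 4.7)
The plan is to prove the statement by induction on the length of the $\omega_r$ derivation, leveraging the state equivalence relation from Definition \ref{def:eqchrvis}. The induction invariant will be that after $i$ steps under $\omega_r$, there exists a (possibly longer) $\omega_{vis}$ derivation from the same query ending in an equivalent state. The crucial observation enabling the induction is that the only differences between a $\omega_{vis}$ state and its equivalent $\omega_r$ state are the extra components $Gr$, $H\_ann$, and a possibly larger identifier counter $n_{vis}\ge n$; none of these affects the applicability of a $\omega_r$-style transition.

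For the base case, both derivations start from $\langle Q,\{\},\{\},\{\}\rangle_1$ and $\langle Q,\{\},\{\},\{\},\{\},\{\}\rangle_1$ respectively, which are trivially equivalent. For the inductive step, assume $Q\mapsto_{\omega_r}^{i} S_{chr}$ and there is an equivalent $S_{chr_{vis}}$ reached under $\omega_{vis}$. I will then perform a case analysis on which $\omega_r$ transition produces $S_{chr}'$ from $S_{chr}$, covering \textbf{Solve+wakeup}, \textbf{Activate}, \textbf{Reactivate}, \textbf{Apply}, \textbf{Drop}, and \textbf{Default}. In each case I will exhibit a matching sequence of $\omega_{vis}$ transitions leading from $S_{chr_{vis}}$ to a state equivalent to $S_{chr}'$.

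For \textbf{Solve+wakeup}, \textbf{Activate}, \textbf{Reactivate}, and \textbf{Default}, the argument is essentially a straightforward replay: by equivalence the relevant stack top, built-in store, and CHR store entries agree, so the analogous $\omega_{vis}$ rule is applicable and produces an equivalent successor (with $Gr$, $H\_ann$ unchanged and $n_{vis}$ either unchanged or incremented, preserving $n_{vis}\ge n+1$). For \textbf{Drop} and \textbf{Apply}, I will split each into two sub-cases depending on whether a constraint annotation rule is applicable to the currently active constraint in $S_{chr_{vis}}$. In the sub-case where no annotation is applicable, the $\omega_{vis}$ system directly mirrors $\omega_r$. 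In the sub-case where one is applicable, the higher-precedence \textbf{Apply\_Annotation} must fire first, followed by either \textbf{Draw} or \textbf{Update\_Store}; I will observe that these transitions only modify $Gr$, $H\_ann$, and $n_{vis}$, all of which are outside the equivalence check, so equivalence is preserved and the subsequent $\omega_{vis}$ transition corresponding to the $\omega_r$ step can then fire. I will also use that, without loss of generality, the rule in $P_{CHR^{vis}}$ can be renamed apart with the same fresh variables $x'$ used on the $\omega_r$ side, so the matching substitution and guard satisfaction transfer across.

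The main obstacle I expect is handling \textbf{Apply} correctly in the presence of annotation rules: because \textbf{Apply\_Annotation} is required to take precedence, the $\omega_{vis}$ derivation may have to interleave several annotation firings (each possibly triggering a \textbf{Draw} or \textbf{Update\_Store}) before the CHR rule can be fired, and one must confirm that this interleaving always terminates and cannot block the eventual application of the desired rule. Termination follows from the condition $\neg\,contains(H\_ann,\langle r,id(H)\rangle)$ in \textbf{Apply\_Annotation}, which ensures each annotation rule fires at most once per matching head tuple, so only finitely many annotation steps can intervene. Non-blocking follows because the applicability condition of \textbf{Apply} in $\omega_{vis}$ reduces, once all applicable annotation rules have been exhausted, to exactly the $\omega_r$ \textbf{Apply} condition. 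When the body of the CHR rule removes head constraints and the program communicates heads, the auxiliary step through $remove\_gr\_obj$ changes only $Gr$ and therefore again stays inside the equivalence relation. Together, these observations close the induction and give the claimed completeness. \qed
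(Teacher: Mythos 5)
Your proposal is correct and follows essentially the same route as the paper's own proof: induction on the length of the $\omega_{r}$ derivation, a case analysis over the six transitions, and for each case a replay under $\omega_{vis}$ in which any intervening \textbf{Apply\_Annotation}/\textbf{Draw}/\textbf{Update Store} steps are shown to touch only $Gr$, $H\_ann$ and the counter, which lie outside the equivalence relation of Definition~\ref{def:eqchrvis}. The only differences are cosmetic --- the paper also splits the \textbf{Default} case into the same two annotation sub-cases, while you fold it into the straightforward replays, and you add an explicit (and welcome) termination argument for the interleaved annotation firings that the paper leaves implicit.
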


\begin{theorem}[Soundness]
\label{pro:proof2}
Given a \textsf{CHR} program $P$ (running under $\omega_{r}$) along with its user defined annotations and its corresponding $P_{CHR^{vis}}$ program (running under $\omega_{vis}$),
for the same query $Q$, every derived state  $S_{chr_{vis}}$: 
$Q$ $\mapsto_{\omega_{vis}}^{*} $ $S_{chr_{vis}}$
has en equivalent state $S_{chr}$: 
$Q$ $\mapsto_{\omega_{r}}^{*} $ $S_{chr}$

\begin{proof}
\label{pr:pro2}
\(\)\\\emph{Base Case}: 
\\For the initial query the two states $Q$, $S_{chr_{vis}}=\langle Q,\{\},\{\} \rangle$ and $S_{chr} = \langle Q,\{\} \rangle$ are equivalent according to Definition \ref{def:eqchrvis}.
\\\\\emph{Induction Hypothesis}:
Suppose that there are two equivalent derived states  $S_{chr_{vis}}=\langle A,S,Gr,B,T,H\_ann \rangle_m$ and $S_{chr}=\langle A,S,B,T \rangle_n$ such that $Q \mapsto_{\omega_{vis}}^{i} S_{chr_{vis}}$ and $Q \mapsto_{\omega_{r}}^{j} S_{chr}$.\\ 
\\\emph{Induction Step}:
\\The proof shows that any transition applicable to $S_{chr_{vis}}$ under $\omega_{vis}$  produces a state $S_{chr}'$ such that under $\omega_{r}$ applying a transition to $S_{chr}$ (which is equivalent to $S_{chr_{vis}}$) produces a state $S_{chr}'$ that is equivalent to $S_{chr}$.\\
The different cases are enumerated below:
\begin{enumerate}
\item \textbf{Case 1: Applying solve+wakeup to $\mathbf{S_{chr_{vis}}}$:}
\\Under $\omega_{vis}$, solve+wakeup is applicable in the case where the stack has the form $\left[c|A\right]$ such that c is a built-in constraint and $\mycal{CT} \models \forall(\left(c\wedge B \leftrightarrow B'\right))$\\
			and $wakeup\left(S_{0}\uplus S_{1} ,c,B\right) = S_{1}$ such that 
\\$S_{chr_{vis}} 
			\mapsto_{solve+wake}
			 S_{chr_{vis}}':\langle S_{1}++A, S_{0}\uplus S_{1},Gr,B',T,H\_ann\rangle_{m}$.
Since $S_{chr_{vis}}$ and $S_{chr}$ are equivalent, $S_{chr}$ has an equivalent stack and built-in store according to Definition \ref{def:eqchrvis}. Thus the corresponding transition \emph{solve+wakeup} is applicable to $S_{chr}$ under $\omega_{r}$ producing a state $S_{chr}'$ such that: $S_{chr}'=\langle S_{1}++A, S_{0}\uplus S_{1},B',T\rangle_{n}$.
According to Definition \ref{def:eqchrvis}, the two states $S_{chr_{vis}}'$ and $S_{chr}'$ are equivalent.
\\
\item \textbf{Case 2: Applying Activate:}
\\Such a transition is applicable to $S_{chr_{vis}}$ under $\omega_{vis}$ in the case where the top of the stack of $S_{chr_{vis}}$ contains a \textsf{CHR} constraint $c$. In this case:
\\$S_{chr_{vis}}:\langle \left[c|A\right],  S,Gr,B,T,H\_ann\rangle_{m} 
			\mapsto_{activate}
			S_{chr_{vis}}': \langle \left[c\#n:1|A\right],  \{c\#n\} \cup S,Gr,B,T,H\_ann\rangle_{m+1}
			$
		\\given that $c$ is a \textsf{CHR} constraint.\\
The equivalent state $S_{chr}$ has the same stack triggering the transition \emph{Activate} under $\omega_{r}$ producing a state $S_{chr}':\langle \left[c\#n:1|A\right],  \{c\#n\} \cup S,Gr,B,T,H\_ann\rangle_{n+1}$ which is also equivalent to $S_{chr_{vis}}'$
\\
\item \textbf{Case 3: Applying Reactivate:}
\\In this case,
$S_{chr_{vis}} 
			\mapsto_{reactivate}
			S_{chr_{vis}}' \langle \left[c\#i:1|A\right],   S,Gr,B,T,H\_ann\rangle_{m}
			$
\\such that $S_{chr_{vis}}= \langle \left[c\#i|A\right],  S,Gr,B,T,H\_ann\rangle_{m}$
and $c$ is a \textsf{CHR} constraint.
		\\
The equivalent state $S_{chr}$ has an equivalent stack triggering the transition \emph{reactivate} under $\omega_{r}$. The transition application produces $S_{chr}':\langle \left[c\#i:1|A\right],   S,B,T\rangle_{n}$ which is also equivalent to $S_{chr_{vis}}'$.\\
\item According to Definition \ref{def:eqchrvis} and since $S_{chr_{vis}}$ is equivalent to $S_{chr}$, they both have the same stack. The transition \textbf{Draw} is only applicable if the top of the stack contains a graphical object. Since the stack of $S_{chr}$ never contains graphical objects and since it is equivalent to $S_{chr_{vis}}$, the stack of $S_{chr_{vis}}$ at this point does not contain graphical objects as well.
Thus, in this case, the transition \textbf{\emph{Draw}} is not applicable to $S_{chr_{vis}}$ under $\omega_{vis}$.
\\
\item  Similarly, according to Definition \ref{def:eqchrvis} and since $S_{chr_{vis}}$ is equivalent to $S_{chr}$, the stack of $S_{chr_{vis}}$ at this point does not contain graphical actions since both states should have the same stack. The transition \textbf{Update store} is only applicable if the top of the stack contains a graphical action. 
Thus, in this case, the transition \textbf{\emph{Update store}} is not applicable to $S_{chr_{vis}}$ under $\omega_{vis}$.
\\
\item \textbf{Case 4: \emph{Apply Annotation Rule} Transition}
\\The transition \emph{Apply Annotation} is triggered when the {stack} has on top a constraint associated with an annotation rule. The constraint store should contain constraints matching the head of the annotation rule such that this rule was not fired with those constraint(s) before and the pre-condition of the annotation rule is satisfied.  {Thus, the rule could be associated with more than one constraint including} {the one on top of the stack.} \\{The constraint store should however, contain matching constraints for the rest of the constraints in the head of} \\{the annotation rule.}
\\$S_{chr_{vis}} \mapsto_{apply\_annotation}$ $S_{chr_{vis}}':\langle \left[Obj\#\langle r, id\left(H\right), \{\;\}  \rangle|A\right], H \cup S ,Gr ,$
		$B, T , H\_ann \cup \{\langle r, id\left(H\right),\{\;\}\rangle\} \rangle_{m}$ \\ 
		such that $\neg contains\left(r,id\left(H\right)\right)$. The renamed annotation rule with variables $x'$ is :
		\\$g\; r\; @\; H'\;$ 
		$==> Condition\; |\; Obj'$ 
		\\
		{$\mycal(CT) \models \exists \left(B\right) \wedge \forall (B \implies \exists x' ((chr\left(H\right)=(H') \wedge Cond \wedge$} $output\_graphical\_object\left(H',x',Obj'\right)=Obj)))$
		\\Either the transition \emph{Draw} or \emph{Update store} is applicable to $S'_{chr_{vis}}$. The output is $S''_{chr_{vis}}:\langle A, S ,Gr' ,$
		$T , H'\_ann  \rangle_{m'}$.
	In case, $Obj$ is a graphical object, then $H'_ann=generate\_new\_ann_history\left(H\_ann \cup \{\langle r, id\left(H\right),\{\;\}\rangle\}\right)\wedge Gr'=Gr\cup\{Obj\#m\}\wedge m'=m+1$.
	In case, $Obj$ is a graphical action, then $Gr'=update\_graphical\_store\left(Gr,Obj\right) \wedge Gr'=Gr \wedge m'=m$
		Any transition applicable to $S_{chr_{vis}}''$ at this stage is covered through the rest of the cases. Thus the application of the transition $apply\_annotation$ is considered as not to affect the equivalence of the output state with $S_{chr}$.
		\\
\item \textbf{Case 5: the Apply transition}
\\
In the case where a \textsf{CHR} rule is applicable to $S_{chr_{vis}}$, the transition 
\emph{Apply} is triggered under $\omega_{vis}$. A \textsf{CHR} rule $r$ is applicable in the case where a renamed version of the rule $r$ with {variables $x'$}:
($r\; @\; H'_{k}\;\backslash \; H'_{r} \Leftrightarrow \; g \; | \; C.$) where $\langle r,id\left(H_{k}\right) + id\left(H_{r}\right) \rangle\notin T$ and
	{$ \mycal{CT} \models \exists(B) \wedge \forall (B \implies \exists x'($
		$chr\left(H_{k}\right)=(H'_{k})\wedge chr\left(H_{r}\right)=(H'_{r}) \wedge g))$}.
In this case, $S_{chr_{vis}}$ has the form: $\langle [c\#i:j|G], H_{k}\uplus H_{r}\uplus S,Gr,B,T,H\_ann\rangle_{m}$. The output state $S_{chr_{vis}}'$ has the form 
\\{$\langle C ++ H ++ G , H_{k}\cup S ,Gr,B \wedge chr\left(H_{k}\right)=(H'_{k})\wedge chr\left(H_{r}\right)=(H'_{r}) \wedge g,$}
		{$T \cup \{ \langle r,id\left(H_{k}\right) + id\left(H_{r}\right)\rangle\}, H\_ann \rangle_{m} $.}
		Due to the fact that $S_{chr}$ is equivalent to $S_{chr_{vis}}$,it has the following form: $\langle G, H_{k}\uplus H_{r}\uplus S,B,T\rangle_{n}$. For the same program, the \textsf{CHR} rule $r$ is applicable
		{producing $S_{chr}'$:}\\{$\langle C ++ G , H_{k}\cup S ,chr\left(H_{k}\right)=(H'_{k})\wedge chr\left(H_{r}\right)=(H'_{r}) \wedge g \wedge B,$}
		$T \cup \{ \langle r,id\left(H_{k}\right) + id\left(H_{r}\right) \rangle_{n} \}$
		 \[ H =
  \begin{cases}
    \left[c\#i:j\right]      & \quad \text{if } c \text{ occurs in }H'_k\\
   \left [\;\right] & \quad \text{otherwise}\\
  \end{cases}
\]
\\
{Due to the fact that the same \textsf{CHR} rule is applied for both states, the new built-in stores are equivalent according to Definition} \ref{builtinequiv}. {This is due to the fact that since the original states have equivalent constraint stores,} {we assume without loss of generality} {that the matchings in both cases are the same since the same rule was applied. Thus, the rule in the two programs $P_{chr}$ and $P_{chr_{vis}}$ are renamed similarly.}
\\At this point, for $S_{chr_{vis}}'$ one of two cases is possible:
\begin{itemize}
	\item \textbf{An annotation rule is applicable:}
	\\In this case, $S_{chr_{vis}}'$ has the form $\langle \left[cons\#id:occ|A\right] , Head \uplus St,Gra,Bu,T_H,H\_ann\rangle_{m}$ where there exists a renamed, constraint annotation rule {with the same variables $x'$} of the form:
		$g\; ru\; @\; H'\;$ 
		$==> Condition\; |\; Obj'$ where $cons$ is part of $H'$ 
		\\such that
		 $\mycal(CT) \models \exists\left(B \right) \wedge \forall (B \implies \exists x' \wedge Condition \wedge chr \left(Head\right)=(H') \wedge output\_graphical\_object\left(H', x', Obj'\right)=Obj)$ 
		and $ \neg contains\left(H\_ann ,\left( ru, 
		id\left(Head\right) \right)\right)$.
		The output state $S_{chr_{vis}}''$ has the form: 
		\\$\langle \left[Obj\#\langle ru,id\left(Head\right),\{\;\} \rangle,cons\#id:occ|A\right],Head \uplus St,Gra,$ 
		$Bu,$
		$T\_H , H\_ann \cup \{\langle r, id\left(H\right) \{\;\} \rangle\} \rangle_{m}$.
		\\Similar to the previous cases, $S_{chr_{vis}}''$ triggers the transition \emph{draw} or \emph{update store} producing $S_{chr_{vis}}''':\langle \left[cons\#id:occ|A\right],Head \uplus St,Gra',$ 
		$Bu,$
		$T\_H , H'\_ann \rangle_{m'}$.
		According to Definition \ref{def:eqchrvis}, $S_{chr_{vis}}'''$ is equivalent to $S_{chr}'$ since the two stacks, constraint stores and propagation histories are not affected by the application of the annotation rule.
		\item \textbf{No annotation rule is applicable}
		\\At this point, $S_{chr_{vis}}'$ is still equivalent to $S_{chr}'$
		\\
\end{itemize}
\item \textbf{Case 6: Applying Drop}
\\In the case where $S_{chr_{vis}}=$ $\langle\left[c\#i:j|A\right],  S,Gr,B,T,H\_ann\rangle_{m}$ 
such that $c$ has no occurrence $j$ in the program and case 5 is not applicable, the transition \emph{Drop} is triggered. \emph{Drop} produces the state $S_{chr_{vis}}'=$
			$
		\langle A,  S,Gr,B,T,H\_ann\rangle_{m} 
			$
		\\given that $c\#i:j$ is an occurrenced active constraint and $c$ has no occurrence $j$ in the program. 
		\\
		Since $S_{chr}$ is equivalent to $S_{chr_{vis}}$, they both have the same stack $\left[c\#i:j|A\right]$. Thus under $\omega_{vis}$, the same transition \emph{drop} is triggered producing $S_{chr}': \langle A,  S,B,T\rangle_{n}$. According to Definition \ref{def:eqchrvis}, $S_{chr_{vis}}'$ and $S_{chr}'$ are equivalent as well.
		\\
		\item \textbf{Case 7: Applying Default}
		\\In the case where none of the above cases hold, the transition \emph{Default} transforms $S_{chr_{vis}}$ to
		\\$
		S_{chr_{vis}}': \langle \left[c\#i:j+1|A\right],  S,Gr,B,T,H\_ann\rangle_{m} 
			$.
			Similarly the equivalent state $S_{chr}$ triggers the same transition \emph{Default} in this case. The output state $S_{chr}': \left[c\#i:j+1|A\right],  S,Gr,B,T,H\_ann\rangle_{n} $ is still equivalent to $S_{chr_{vis}}'$
\end{enumerate}
\end{proof}
\end{theorem}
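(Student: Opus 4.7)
The plan is to mirror the structure of the completeness proof (Theorem~\ref{proof:pr1}) but in the opposite direction, proceeding by induction on the length of the $\omega_{vis}$ derivation $Q \mapsto_{\omega_{vis}}^{i} S_{chr_{vis}}$. The base case is immediate from Definition~\ mis{def:eqchrvis}: both initial states reduce to $\langle Q,\{\},\{\},\{\}\rangle_1$ and $\langle Q,\{\},\{\},\{\},\{\},\{\}\rangle_1$, which are equivalent because the goal stacks, CHR stores, built-in stores, and propagation histories coincide and the graphical/annotation components play no role in the equivalence test. For the inductive step I would assume an equivalent pair $S_{chr_{vis}}$ and $S_{chr}$, then perform a case split on which $\omega_{vis}$ transition fires next from $S_{chr_{vis}}$, exhibiting in each case a (possibly empty) sequence of $\omega_r$ transitions from $S_{chr}$ that reaches a state equivalent to $S_{chr_{vis}}'$.

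I would partition the $\omega_{vis}$ transitions into two groups. The first group contains Solve+wakeup, Activate, Reactivate, Apply, Drop, and Default, each of which is a syntactic copy of an $\omega_r$ transition extended trivially over $Gr$ and $H\_ann$. Because $S_{chr_{vis}}$ and $S_{chr}$ agree on the stack, the CHR store, the built-in store, and the propagation history, the applicability conditions for the matching $\omega_r$ transition are immediately satisfied; firing that transition from $S_{chr}$ yields a state $S_{chr}'$ that still agrees with $S_{chr_{vis}}'$ on all five components compared in Definition~\ref{def:eqchrvis}. For the Apply case I would invoke the same renaming-apart argument used in the completeness proof, choosing the same fresh variables $x'$ in both programs so that the matchings and resulting built-in stores are equivalent in the sense of Definition~\ref{builtinequiv}. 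The second group consists of Apply\_Annotation, Draw, and Update Store. None of these alter the CHR store, built-in store, or propagation history; Apply\_Annotation only pushes a graphical item onto the stack and extends $H\_ann$, while Draw and Update Store pop that item and modify $Gr$ (and, for Draw, increment $n$). In all three cases no $\omega_r$ transition is needed and $S_{chr_{vis}}'$ remains equivalent to the same $S_{chr}$.

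The main obstacle I expect is justifying the claim of equivalence across the intermediate states produced by Apply\_Annotation, since the top of the stack then carries a graphical-item token $Obj\#\langle r,id(H),\{\}\rangle$ while $S_{chr}$'s stack still begins with the CHR active constraint $c\#i:j$. Resolving this requires interpreting the goal-store clause of Definition~\ref{def:eqchrvis} through $get\_constraints$, which filters out non-CHR entries, so that the graphical token is invisible to the equivalence test. A related subtlety is bookkeeping with the identifier counter: Draw increments $n_{vis}$ while no $\omega_r$ step is taken, but Definition~\ref{def:eqchrvis} only requires $n_{vis}\geq n$, which is preserved because this inequality is monotone under all transitions. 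Once these two points are settled, the remaining cases reduce to routine verification that matches the completeness proof case by case.
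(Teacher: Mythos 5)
Your proposal follows the same overall strategy as the paper's proof: induction on the length of the $\omega_{vis}$ derivation, a case split on the transition fired, mirroring the non-graphical transitions (\emph{Solve+wakeup}, \emph{Activate}, \emph{Reactivate}, \emph{Apply}, \emph{Drop}, \emph{Default}) by their $\omega_{r}$ counterparts, and absorbing the graphical ones with no $\omega_{r}$ step; the base case, the renaming-apart argument for \emph{Apply}, and the $n_{vis}\geq n$ bookkeeping all match. The one place where you genuinely diverge is the treatment of \emph{Draw} and \emph{Update Store}. You make them standalone cases and justify equivalence of the intermediate state (whose stack top is a graphical token) by reading $get\_constraints$ as filtering out non-\textsf{CHR} entries. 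But the $get\_constraints$ of the $\omega_{vis}$ section does not filter: it merely strips the $\#id$ part, so a goal stack beginning with $Obj\#\langle r, id\left(H\right),\{\}\rangle$ would contribute $Obj$ to the sequence compared in Definition~\ref{def:eqchrvis}, and clause (1) of the equivalence would fail at that intermediate state; the filtering you appeal to is only introduced in Definition~\ref{def:newget}, for auxiliary constraints under $\omega_{vis_{r}}$. The paper instead argues that \emph{Draw} and \emph{Update Store} are never applicable to a state satisfying the induction hypothesis --- such a state's stack must agree with an $\omega_{r}$ stack and hence contains no graphical tokens --- and then handles them only as the mandatory second half of the \emph{Apply\_Annotation} case, treating the pair as a single macro-step after which equivalence with the unchanged $S_{chr}$ is restored. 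Your route is salvageable either by adopting that bundling or by explicitly extending the equivalence definition so that graphical tokens are invisible to $get\_constraints$, but as written the key step rests on a property the stated definitions do not provide.
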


\subsubsection{Operational Semantics including Rule Annotations}
\(\)\\Table \ref{table:omegavisruleann} shows the operational semantics of $\omega_{vis_{r}}$. $\omega_{vis_{r}}$ is basically $\omega_{vis}$ but taking rule annotations into account. A state of $\omega_{vis_{r}}$  is a tuple $\langle G,S,Gr,B,T, H\_ann,Cons\_r \rangle_{n}$. $G$, $S$, $Gr$, $B$, $T$, $H\_ann$, and $n$ have the same meanings as in an $\omega_{vis}$ state. However, $G$ {can hold ,in addition to the previously seen formats of constraints, the constraint in addition} {the name of the rule that added it}. $Cons\_r$ holds for each constraint, its ID in addition to the name of the rule used to add it. It is an extended form of the constraint store.
{If the constraint comes from the query of the user or if }{it is an auxiliary constraint,} {the keyword, \verb+aux+ is used for the rule name inside $G$ and $Cons\_r$}.
\begin{definition}
\label{def:rulename}
\(\\\)Similar to the previously defined functions, 
for a $\omega_{vis_{r}}$state $\langle G, S,Gr,B,T,H\_ann,Cons\_r\rangle_{n}$, $rule\_name\left(H\right)$ is a function defined as: $rule\_name\left(c\#n\right)=r$ such that $c\#n\#r$ $\in$ $Cons\_r$
\end{definition}

\begin{definition}
\label{def:newget}
\(\\\)The definition of the function $get\_constraints\left(Sq\right)$ is modified such that:
for a sequence $Sq=\left(c_1\#r_1,\ldots,c_n\#r_n\right)$ the function $get\_constraints\left(Sq\right)=\cup_{i=1}^{n} c_{i}$ such that $c_{i}$ is not an auxiliary constraint.
{$r_1\ldots r_n$ could thus represent identifiers or rule names.}
{Definition \ref{def:newget} is thus reflected in the equivalence definition of states introduced in Definition \ref{def:eqchrvis}
}
\end{definition}
\begin{longtable}{l}
    \hline
    \\
    1. \textbf{Solve+wakeup}:
			\\$\langle \left[c|A\right],  S_{0}\uplus S_{1},Gr,B,T,H\_ann,H\_cons\rangle_{n} 
			\mapsto_{solve+wake}
			 \langle S_{1}++A, S_{0}\uplus S_{1},Gr,B',T,H\_ann,H\_cons\rangle_{n}$
			\\given that $c$ is a built-in constraint and $\mycal{CT} \models \forall((c\wedge B \leftrightarrow B'))$
			and $wakeup\left(S_{0}\uplus S_{1} ,c,B\right) = S_{1}$ \\
    \\\hline\\
    	2. \textbf{Activate}
		$\langle\left[c\#rule\_name|A\right],  S,Gr,B,T,H\_ann,H\_cons\rangle_{n} 
			\mapsto_{activate}$
			\\$\langle \left[c\#n:1|A\right],  \{c\#n\} \cup S,Gr,B,T,H\_ann,H\_cons\cup \{c\#n\#rule\_name\}\rangle_{n+1}
			$
		\\given that $c$ is a \textsf{CHR} constraint.\\
        \\\hline\\
 3. \textbf{Reactivate}
		$\langle \left[c\#i|A\right],  S,Gr,B,T,H\_ann,H\_cons\rangle_{n} 
			\mapsto_{activate}
			\langle \left[c\#i:1|A\right],   S,Gr,B,T,H\_ann,H\_cons\rangle_{n}
			$
		\\given that $c$ is a \textsf{CHR} constraint and $i$ is not a valid rule name.
		\\
        \\\hline\\
			4. \textbf{Draw}
		: $\langle \left[Obj\#\langle r, id\left(H\right),Obj\_ids \rangle|A\right], S,Gr,B,T,H\_ann,H\_cons\rangle_{n} \mapsto_{draw} \langle A,S ,Gr\cup \{Obj\#n\},B,T$\\$,H'\_ann,H\_cons\rangle_{n+1}$ given that $Obj$ is a graphical object: $graphical\_object\left(Actual_0,\ldots,Actual_k\right)$ and
		\\$H'\_ann=generate\_new\_ann\_history\left(Obj,n,id\left(H\right),H\_ann\right)$
		\\The actual parameters of $graphical\_object$ are used to visually render the object.\\
		  \\\hline\\
		  5. \textbf{Update Store}
		: $\langle \left[Obj\#\langle r,id\left(H\right),Obj\_ids\rangle|A\right], S,Gr,B,T,H\_ann,H\_cons\rangle_{n} \mapsto_{update\;store} \langle A,S ,Gr'$
		\\$,B,T,H\_ann,H\_cons\rangle_{n}$ given that $Obj$ is a graphical action: $graphical\_action\left(Actual_0,\ldots,Actual_k\right)$.
		\\$Gr'=update\_graphical\_store\left(Gr,graphical\_action\left(Actual_0,\ldots,Actual_k\right)\right)$
		\\The actual parameters of $graphical\_action$ are used to update the graphical objects.\\
		  \\\hline\\
		6. \textbf{Apply\_Annotation}:
		\\$\langle \left[c\#i:j|G \right],H \cup S,Gr,B,T,H\_ann, H\_cons\rangle_{n} \mapsto_{apply\_annotation}$ \\$\langle \left[Obj\#\langle r,id\left(H\right),\{\;\}\rangle,c\#i:j|G\right], H \cup S ,Gr ,
		B, T , H\_ann \cup \{(ann\_name, id(H))\}, H\_cons\rangle_{n}$ \\
		where there is a, renamed, constraint annotation rule {with variables $y'$} of the form: 
		\\$g\; ann\_name\; @\; H'$ 
		$==> Condition | Obj'$ 
		where $c$ is part of $H'$
		such that
			\\{$\mycal(CT) \models \exists\left(B\right) \wedge \forall (B \implies \exists y' $ $(chr\left(H\right)=(H') \wedge output\_graphical\_object\left(H',y', Obj'\right)=Obj))$ 
		}
		\\ and $rule\_name\left(H\right)$ does not have an associated annotation rule
		and $\neg contains\left(H\_ann,\left(r,id\left(H\right)\right)\right)$
		\\\hline\\
		7. \textbf{Apply}
		: $\langle \left[c\#i:j|A\right], H_{k}\uplus H_{r}\uplus S,Gr,B,T,H\_ann,$ 
		$H\_cons\rangle_{n} \mapsto_{apply}$
		\\
		$\langle C' ++ H ++  A , H_{k}\cup S ,Gr',chr\left(H_{k}\right)=(H'_{k})\wedge chr\left(H_{r}\right)=(H'_{r}) \wedge g$ 
		$\wedge B,T \cup \{ \langle r,id\left(H_{k}\right) + id\left(H_{r}\right) \rangle \},$
		\\$H\_ann, H\_cons\rangle_{n}$ \\
		 there is and a renamed rule in $P_{vis}$ with variables $x'$ where the jth occurrence of $c$ is part of the head. \\The renamed rule has the form:\\
		$r\; @\; H'_{k}\;\backslash \; H'_{r} \Leftrightarrow \; g \; | \; C.$\\
		such that 
		{$ \mycal{CT} \models \exists(B) \wedge \forall (B \implies \exists x'$ $(chr\left(H_{k}\right)=(H'_{k})$ $\wedge$  $chr\left(H_{r}\right)=(H'_{r})$  $\wedge g$))}
		\\and $\langle r,id\left(H_{k}\right) + id\left(H_{r}\right) \rangle\notin T$
		and there are no associated and applicable annotation rule(s) to $c$ 
		\\or any part(s) of it that are not executed yet.
		\\$C'=C\#r$ 
		 if and only if there is no annotation rule associated with $r$
		\\{otherwise if $r$ has an associated rule under the same variables (matching) $x'$:}
		\\$g\;rule\_name\;r==> ann\_cond | Aux\_cons'$
		\\such that :
		\\{$\mycal(CT) \models \exists(B) \wedge \forall (B \implies (Aux\_cons') = Aux\_cons \wedge ann\_cond )$}
		\\or\\
		{$\mycal(CT) \models \exists(B) \wedge \forall (B \implies \exists x' ((chr(H_{k})=(H'_{k}) \wedge chr(H_{r})=(H'_{r}) \wedge g \wedge (Aux\_cons') = Aux\_cons \wedge ann\_cond ))$}
		\\then $C'=\left[Aux\_cons\#aux | C\#r\right] $
		\\If $c$ occurs in $H'_k$ then $H=\left[c\#i:j\right]$ otherwise $H=\left[\right]$.
		\\If the program communicates the head constraints (i.e. contains \verb+comm_head(T) ==> T=true+) then
		\\${Gr'= remove\_gr\_obj\left(G,id\left(H_{r}\right),H\_ann\right)}$  otherwise $Gr'=Gr$.
		\\
		 \\\hline\\
				8. \textbf{Drop}
		\\
		$\langle\left[c\#i:j|A\right],  S,B,T,H\_ann,H\_cons\rangle_{n} 
			\mapsto_{drop}
		\langle A,  S,B,T,H\_ann,H\_cons\rangle_{n}$
		\\given that $c\#i:j$ is an occurrenced active constraint and $c$ has no occurrence $j$ in the program
		\\and that there is no applicable constraint annotation rule for the constraint c.
		\\
		 \\\hline
		\\
		 9. \textbf{Default}
		$\langle\left[c\#i:j|A\right],  S,B,T,H\_ann,H\_cons\rangle_{n} 
			\mapsto_{drop}
		\langle \left[c\#i:j+1|A\right],  S,B,T,H\_ann,H\_cons\rangle_{n}$
		\\
		in case there is no other applicable transition.
		\\\\
		 \hline		
\caption{Transitions of $\omega_{vis}$ taking rule annotations into account}
	\label{table:omegavisruleann}
\end{longtable}
As seen through Table \ref{table:omegavisruleann}, the transition \emph{Apply} could activate a rule annotation (if applicable) associated with the applied \textsf{CHR} rule. In this case, the auxiliary constraint is added to the goal with the keyword $aux$.
Since the \textsf{CHR} rule is applied with the combination of constraints once, the rule annotation is also applied once and there is no risk of running infinitely. The rest of the transitions were modified only to include the newly added state parameter $H\_cons$.
In addition, the \emph{apply annotation} transition is only applied if the constraints activating it were not added by a rule that is associated with a rule annotation rule. This ensures the previously discussed property that whenever a rule is associated with a rule annotation, then the body constraints can never trigger their own visual annotation rules. Since the rule is annotated then the individual constraint annotations are discarded.

\subsubsection{Completeness and Soundness}
\(\)\\Since according to the modified definition shown in Definition \ref{def:newget}, the state equivalence will not take auxiliary constraints into account.
The proofs for completeness and soundness shown in Proof \ref{pr:pro1} and Proof \ref{pr:pro2} still hold.
As seen from the transitions in Table \ref{table:omegavisruleann}, the auxiliary constraint of the rule annotation is dealt with as a normal \textsf{CHR} constraint. Thus the only difference between these transitions and the previous ones is that sometimes, auxiliary constraints will exist in the constraint store of the final states. The states however remain equivalent.
Thus even if annotations of rules were applicable, $\omega_{vis_{r}}$ is still sound and complete.

Proof \ref{proof:pr1} introduced for the completeness check has the following amendment in the induction step:
\\\textbf{Case 4: (Applying the transition Apply)}
		\\For $\omega_{r}$, the transition \emph{Apply} is triggered in the case where $S_{chr}=\langle \left[c\#i:j|A\right], H_1 \uplus H_2 \uplus S,B,T\rangle_{n} $
			such that the jth occurrence of $c$ is part of the head of the re-named apart rule {with variables $x'$}:
		{\(r\; @\; H'_1\; \backslash \;H'_2 \; \Leftrightarrow \;g \;|\; C.\)}
		\\
		$ (chr\left(H_1\right)=(H'_{1}) \wedge chr\left(H_{2}\right)=(H'_{2}) \wedge$
	{	$ \mycal{CT} \models \exists(B) \wedge \forall (B \implies \exists x' (chr\left(H_1\right)=(H'_{1}) \wedge chr\left(H_{2}\right)=(H'_{2}) \wedge g))$ }
		and $\left(r,id\left(H_1\right)+id\left(H_2\right)\right) \notin T$.
		In this case, $S_{chr} \mapsto_{apply\;r}$
			$S_{chr}':\langle C + H + A,  H_1 \cup S,$
			{$chr\left(H_1\right)=(H'_{1}) \wedge chr\left(H_{2}\right)=(H'_{2}) \wedge g \wedge B,$}
			$T \cup \{\left(r,id\left(H_1\right)+id\left(H_2\right)\right)\}\rangle_{n}$
		 \[ H =
  \begin{cases}
    \left[c\#i:j\right]      & \quad \text{if } c \text{ occurs in }H'_1\\
    \left[\;\right] & \quad \text{otherwise}\\
  \end{cases}
\]
\\Due to the fact that $S_{chr}$ and $S_{chr_{vis}}$ are equivalent, they both have the same goal stacks and constraint stores. However for $\omega_{vis}$ one of two possibilities could take place.
		\\\tabitem It could be that there is no applicable constraint constraint annotation rule either because  any applicable annotation rule was already executed or at this point there is no applicable annotation rule at this point.
		The transition $Apply$ is thus triggered right away under $\omega_{vis}$.
		\\
		\\\tabitem It could, however, be the case that there is an applicable constraint annotation rule.
		\\In this case an annotation rule ($r_{ann}$) for $c$ is applicable such that:
		\\$S_{chr_{vis}}\langle \left[c\#i:j|A\right],H_1 \uplus H_2 \uplus S,Gr,B,T,H\_ann,H\_cons\rangle_{m} \mapsto_{apply\_annotation}$ 
		\\$S_{chr_{vis}}'':\langle \left[Obj\#\langle  r,id\left(H\right),\{\} \rangle,c\#i:j|A\right],H_1 \uplus H_2 \uplus S ,Gr , B,$ 
		$T , H\_ann \cup \{\langle r_{ann}, id\left(H\right),\{\;\}\rangle\},h\_cons \rangle_{m}$ .
		\\At this point either the transition \emph{Draw} or \emph{Update store} should be applied.
		Thus, 
		\\$S_{chr_{vis}}' \mapsto_{draw\big/update store} S_{chr_{vis}}':\langle \left[c\#i:j|A\right], H_1 \uplus H_2 \uplus S ,Gr', B,$ 
		$T , H'\_ann, H\_cons \rangle_{m'}$
		\\The transition \emph{Draw} is applicable in case $Obj$ is a graphical object such that:
		$Gr'=Gr\cup\{Obj\#m\}\;\wedge\;m'=m+1 \;\wedge\; H'\_ann = generate\_new\_ann\_history\left(Obj,m,r,id\left(H\right),\cup \{\langle r_{ann}, id\left(H\right),\{\;\}\rangle\}\right)$.
		\\In the case where $Obj$ is a graphical action, the transition \emph{Update Store} is applied such that:
		\\$Gr'=update\_graphical\_store\left(Gr,Obj\right)\;m'=m\;H'\_ann=\cup \{\langle r_{ann}, id\left(H\right),\{\;\}\rangle\}$
		\\\\\\The two transitions do not affect the constraint stores or goal stacks. Thus, the equivalence of the states is not affected.
		\\Due to the fact that $S_{chr}$ and $S_{chr_{vis}}'$ are equivalent, in the case where $S_{chr}$ triggers the transition \emph{Apply} under $\omega_{r}$, $S_{chr_{vis_{r}}}$ also triggers the same transition for the same \textsf{CHR} rule, under $\omega_{vis}$ producing a state 
		($S_{chr_{vis}}'':\langle C' + H + A,  H_1 \cup S,Gr'',$
			{$chr(H_1) = H'_1 \wedge chr(H_2) = H'_2 \wedge g \wedge B,$}
			$T \cup \{(r,id(H_1)+id(H_2))\}, H\_ann, H\_cons\rangle_{m}$)
			where 
			\\{$ \mycal{CT} \models \exists(B) \wedge \forall (B \implies \exists x' (chr(H_1)=(H'_{1}) \wedge chr(H_{2})=(H'_{2}) \wedge g))$
			}
			and
					 \[ H =
  \begin{cases}
    \left[c\#i:j\right]      & \quad \text{if } c \text{ occurs in }H'_1\\
    \left[\;\right] & \quad \text{otherwise}\\
  \end{cases}
\]
			\\{Since the same rule is applied in both cases, the two resulting states
			$S_{chr_{vis}}''$ and ($S_{chr}'$) are equivalent.}
			{Similarly, the matchings in both cases are have to be the same since the original states have equivalent stores. Thus. without loss of generality, the applied rule is assumed to be renamed with the same variables in both programs.} {The two new built-in stores are still equivalent and the two resulting states are thus also equivalent.}
			\\If the program communicates the head constraints (i.e. contains \verb+comm_head(T) ==> T=true+) then
		\\${Gr''= remove\_gr\_obj\left(Gr',id\left(H'_{r}\right),H'\_ann\right)}$
			There are however, two possibilities in this case, 
			\begin{itemize}
			    \item if $r$ is associated with an annotation rule {(renamed with variables $x'$)}
			    \\$r ==> ann\_cond | Aux\_cons'$.
			    \\where 
		 $ \mycal{CT} \models \exists(B) \wedge \forall (B \implies (Aux\_cons')=Aux\_cons \wedge ann_cond$ 
			    \\In this case, $C'=[Aux\_cons\#aux|C]$.
			    \\$S_{chr_{vis}}'$ is thus still equivalent to $S_{chr}'$ since auxiliary constraints are disregarded in the equality check.
			    
			     \item if $r$ is not associated with an annotation
			     or if the annotation is not applicable then $C'=C\#r$ making the two states $S_{chr_{vis}}'$ and $S_{chr}'$ equivalent as well.
			\end{itemize}
			
As for the soundness proof (Proof \ref{pro:proof2}),
the only change is in the following case:
\\\textbf{Case 5: the Apply transition}
\\
In the case where a \textsf{CHR} rule is applicable to $S_{chr_{vis}}$, the transition 
\emph{Apply} is triggered under $\omega_{vis}$. A \textsf{CHR} rule $r$ is applicable in the case where a renamed version of the rule $r$ {with variables $x'$} is
\\($r\; @\; H'_{k}\;\backslash \; H'_{r} \Leftrightarrow \; g \; | \; C.$) where $\langle r,id\left(H_{k}\right) + id\left(H_{r}\right) \rangle\notin T$ 
and 
\\{$ \mycal{CT} \models \exists(B) \wedge \forall (B \implies \exists x' (chr\left(H_r\right)=(H'_{r}) \wedge chr\left(H_{k}\right)=(H'_{k}) \wedge g))$
			}.
In this case, $S_{chr_{vis}}$ has the form: $\langle \left[c\#i:j|G\right], H_{k}\uplus H_{r}\uplus S,Gr,B,T,H\_ann,H\_cons\rangle_{m}$. The output state $S_{chr_{vis}}'$ has the form {$\langle C' + H + G , H_{k}\cup S ,Gr',$
\\$chr\left(H_r\right)=(H'_{r}) \wedge chr\left(H_{k}\right)=(H'_{k}) \wedge g \wedge B ,$}
		$T \cup \{ \langle r,id\left(H_{k}\right) + id\left(H_{r}\right) \},H\_ann, H\_cons\rangle_{m}$.
		Due to the fact that $S_{chr}$ is equivalent to $S_{chr_{vis}}$,it has the following form: $\langle G, H_{k}\uplus H_{r}\uplus S,B,T\rangle_{n}$. For the same program, the \textsf{CHR} rule $r$ is applicable producing $S_{chr}'$: \\{$\langle C ++ G , H_{k}\cup S ,
		chr\left(H_r\right)=(H'_{r}) \wedge chr\left(H_{k}\right)=(H'_{k}) \wedge g \wedge B,$}
		$T \cup \{ \langle r,id\left(H_{k}\right) + id\left(H_{r}\right) \rangle \}$
		such that
\\		{$ \mycal{CT} \models \exists(B) \wedge \forall (B \implies \exists x' (chr\left(H_r\right)=(H'_{r}) \wedge chr\left(H_{k}\right)=(H'_{k}) \wedge g))$
			}.
		{Similarly, without loss of generarility the same variable renaming was used for both programs. Thus, the two new built-in stores are equivalent according to Definition} \ref{builtinequiv} {since only one possible matching would be possible since the original two states have equivalent stores.}
		Moreover, 
		 \[ H =
  \begin{cases}
    \left[c\#i:j\right]      & \quad \text{if } c \text{ occurs in }H'_k\\
    \left[\;\right] & \quad \text{otherwise}\\
  \end{cases}
\].
In addition, 
		 \[ Gr' =
  \begin{cases}
    remove\_gr\_obj\left(Gr,id\left(H'_{r}\right),H\_ann\right)     & \quad \text{if the program communicates the head constraints to the visual tracer}\\
    Gr & \quad \text{otherwise}\\
  \end{cases}
\].
If a rule annotation ($r ==> ann\_cond | Aux\_cons'$) is applicable, then $C'=\left[Aux\_cons|C\right]$ where $(Aux\_cons')=Aux\_cons$ otherwise $C'=C\#r$.
\\ In both cases, the output states are still equivalent since the equivalence check neglects auxiliary constraints and the graphical store.

\section{$CHR^{vis}$ to $CHR^{r}$ Transformation Approach}
\label{sec:trans}
The aim of the transformation is to eliminate the need of doing any compiler modifications in order to animate \textsf{CHR} programs. 
A $CHR^{vis}$ program $P^{vis}$ is thus transformed to a corresponding $CHR^{r}$ program $P$ with the same behavior. $P$ is thus able to produce the same states in terms of \textsf{CHR} constraints and visual objects as well.

As a first step, the transformation adds for every constraint \verb+constraint/n+ a rule of the form:
\\\(
comm\_cons\_{constraint}\;@\;constraint\left(X_{1},X_{2},...,X_{n}\right)\; \Rightarrow\; check\left(status, false\right)\; |\; \)
                \\\hspace*{3.7cm}\(communicate\_constraint\left(constraint\left(X_{1},X_{2},...,X_{n}\right)\right).\)

The extra rule ensures that every time a \verb+constraint+ is added to the store, the tracer (\emph{external module}) is notified. If \verb+constraint+ was annotated as an interesting constraint, its corresponding annotation rule is activated producing the corresponding visual object(s). The new rules communicate any \verb+constraint+ added to the constraint store.

The user can also choose to communicate to the tracer the head constraints since they could affect the animation. A removed head constraint could affect the visualization in case it is an interesting constraint. In this case, if the user chose to communicate head constraints, the associated visual object, produced before, should be removed from the visual trace.\footnote{The tracer is able to handle the problem of having multiple Jawaa objects with the same name by removing the old object having the same name before adding the new one. This is possible even if the removed head constraint was not communicated.}.

As a second step, the transformer adds for every compound constraint-annotation of the form:\\$cons_{1},\ldots,cons_{n}==>annotation\_constraint_{cons_{1},\ldots,cons_{n}}\left(Arg{1},\ldots,Arg_{m}\right)$, a new rule of the form:
\\$ compound_{cons_{1},\ldots,cons_{n}}\; @\; cons_{1}\left(Arg_{cons_{1_{1}}},\ldots,Arg_{cons_{1_{1x}}}\right), \ldots,cons_{n}\left(Arg_{cons_{n_{1}}},\ldots,Arg_{cons_{n_{ny}}}\right)$
\\$\Rightarrow check\left(status, false\right)\;|\; annotation\_constraint_{cons_{1},\ldots,cons_{n}}\left(Arg{1},\ldots,Arg_{m}\right)$.

By default, a propagation rule is produced to keep $cons_{1},\ldots,cons_{n}$ in the constraint store. However, the transformer could be instructed to produce a simplification rule instead. The annotation is triggered whenever $cons_{1},\ldots,cons_{n}$ exist in the constraint store. Whenever this is the case, the rule $compound_{cons_{1},\ldots,cons_{n}}$ is triggered producing the annotation constraint. Since the annotation constraint is a normal \textsf{CHR} constraint, it is automatically communicated to the tracer using the previous step.

As a third step, the \textsf{CHR} rules annotated by the user as interesting rules should be transformed. The idea is that the \textsf{CHR} constraints produced by such rules should be ignored. In other words, even if the rule produces an interesting \textsf{CHR} constraint, it should not trigger the corresponding constraint annotation. Instead, the rule annotation is triggered.

Hence, to avoid having problems with this case, a generic \emph{status} is used throughout the transformed program $P_{Trans}$.
Any rule annotated by the user as an interesting rule changes the \verb+status+ to $true$ at execution. However, the rules added in the previous two steps check that the status is set to $false$. In other words, if the interesting rule is triggered, no constraint is communicated to the tracer since the guard of the corresponding $communicate\_constraint$ rule fails.
Any rule $rule_{i} @ H_{K}\;\backslash\;H_{R}\; \Leftrightarrow\; G \;| \; B $ with the corresponding annotation $rule_{i}==>annotation\_constraint_{rule_{i}}$ is transformed to:
$rule_{i} @ H_{K}\;\backslash\;H_{R}\; \Leftrightarrow\; G \;|\;set\left(status,true \right),\; B,\;annotation\_constraint_{rule_{i}},\\\;set\left(status,false \right). $
In addition, the transformer adds the following rule to $P_{Trans}$:
\\$comm\_cons_{annotation\_constraint_{rule_{i}}}\;@\; annotation\_constraint_{rule_{i}} \;\Leftrightarrow$\\\hspace*{3.7cm}$\;communicate\_constraint\left(annotation\_constraint_{rule_{i}} \right)$.
\\The new rule thus ensures that the events associated with the rule annotation are considered and that all annotations associated with the constraints in the body of the rule are ignored.

\subsection{Correctness of Transformation Approach}
The aim of the transformation process is to produce a $CHR^{r}$ program ($P_{trans}$) that is able to perform the same behavior of the corresponding $CHR^{vis}$ program ($P_{vis}$) which basically contains the original \textsf{CHR} program $P$ along with the constraint(s) and rule annotations. This section shows that the transformed program, using the steps shown previously, is a correct one. In other words, for the same query $Q$, $P_{trans}$ produces an equivalent state to the one produced by $P$. As seen from the previous section $\omega_{vis}$ was proven to be sound and complete. This implies that any state reachable by $\omega_{r}$ is also reachable by $\omega_{vis}$. In addition, any state reachable by $\omega_{vis}$ is also reachable by $\omega_{r}$. The focus of this section is the initial \textsf{CHR} program provided by the user. The aim is to make sure that $P_{trans}$  produces the same \textsf{CHR} constraints that $P$ produces to make sure that the transformation did not change the behavior that was initially intended by the programmer. The focus is thus to compare how $P$ and $P_{trans}$ perform over $\omega_{r}$.

\begin{theorem}
Given a \textsf{CHR} program $P$ (along with its annotations) and its corresponding transformed program $P_{trans}$ and two states $S_{1} = \langle G, \phi \rangle$ and $S_{2} = \langle G, \phi \rangle$ where $G$ contains the initial goal constraints and $Aux'$ is a set of auxiliary constraints. Then the following holds: 
\\If
\( S_{1} \xmapsto[\omega_{r}]{P} \mathrel{\vphantom{\to}^*} S_{1}^{'}\)
and \(S_{2}\xmapsto[\omega_{r}]{P_{trans}} \mathrel{\vphantom{\to}^*} S_{1}^{'} \cup Aux'\)
 then $P_{trans}$ is equivalent to $P$

\end{theorem}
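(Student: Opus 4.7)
The plan is to proceed by induction on the length of the derivation $S_1 \xmapsto[\omega_{r}]{P}{}^{*} S_1'$, showing at each step that any transition taken under $P$ can be matched by a (possibly longer) sequence of transitions under $P_{trans}$ whose net effect on the ``non-auxiliary'' portion of the state is identical, and conversely that any transition of $P_{trans}$ either mirrors a transition of $P$ on the original constraints or only manipulates the auxiliary constraints introduced by the three transformation steps (the $communicate\_constraint$ rules, the compound $annotation\_constraint$ rules, and the $set(status,\_)$ wrappers around interesting rules). Formally, I would define an auxiliary-erasure function $\pi$ that drops from a state every constraint whose functor is one of $communicate\_constraint/1$, $annotation\_constraint_{\bullet}/k$, $status/\_$ or any $set/check$ bookkeeping constraint, together with every propagation-history entry naming a transformation-generated rule, and then prove the simulation lemma: $\pi\bigl(S_2\bigr) = S_1$ initially, and if $T \xmapsto[\omega_{r}]{P_{trans}} T'$ then either $\pi(T) = \pi(T')$ or $\pi(T) \xmapsto[\omega_{r}]{P} \pi(T')$ (and symmetrically, every $P$-transition on $\pi(T)$ lifts to a $P_{trans}$-derivation on $T$).

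The core observation that makes the simulation lemma go through is that the rules added by the transformation are carefully shaped so as not to interfere with the original ones. First, the new $comm\_cons_{constraint}$ rules are pure propagation rules whose bodies consist only of auxiliary constraints, so by the refined operational semantics they fire exactly once per matching tuple (tracked via $T$) and add no user-visible constraint; the erasure $\pi$ collapses them away. Second, the compound-annotation rules likewise only produce auxiliary $annotation\_constraint_{\ldots}$ facts, which in turn trigger only their own communication rule and then lie inert because no original head mentions them. Third, for each interesting rule $rule_i$, the transformed version differs from the original only by bracketing the body with $set(status,true),\ldots,set(status,false)$ and by adding a single $annotation\_constraint_{rule_i}$ in the body; the bracketing is what guarantees that the auxiliary $comm\_cons$ rules (guarded by $check(status,false)$) stay quiescent during the critical body execution, so no spurious trace events are emitted and, more importantly, no cascade of auxiliary firings is interleaved in a way that could block a subsequent original rule. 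The auxiliary constraints therefore accumulate into exactly the set $Aux'$ that appears in the theorem statement, and they never appear in any original rule head, so they cannot enable or disable any transition of $P$.

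With the simulation lemma in hand, the theorem follows almost immediately. From $S_1 \xmapsto[\omega_{r}]{P}{}^{*} S_1'$ and the backward direction of the lemma I construct a derivation $S_2 \xmapsto[\omega_{r}]{P_{trans}}{}^{*} S_2'$ with $\pi(S_2') = S_1'$; by the hypothesis $S_2 \xmapsto[\omega_{r}]{P_{trans}}{}^{*} S_1'\cup Aux'$ and confluence of the auxiliary layer (the added propagation rules have no alternative matches), I may take $S_2' = S_1' \cup Aux'$. The forward direction of the lemma gives the converse: every $P_{trans}$-derivation projects to a $P$-derivation reaching $S_1'$. Equivalence of the two programs, in the sense that they agree on the CHR-constraint component of every reachable state modulo auxiliary constraints, is then just the conjunction of these two simulations.

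The step I expect to be the main obstacle is establishing that the transformation-generated propagation rules do not cause non-termination or alter rule-firing order under $\omega_{r}$. The refined semantics executes rules in a top-down, procedure-call style, so inserting the $comm\_cons$ propagation rules above the user rules in the program text could, in principle, pre-empt an occurrence the programmer expected to fire first. I would discharge this worry by arguing that (i)~each $comm\_cons$ rule is a propagation rule whose unique match with a given constraint is entered into the propagation history $T$ on first firing and hence cannot re-fire, (ii)~its body contains only auxiliary constraints whose own activation quickly reaches the $drop$ transition because no original rule mentions them, and (iii)~the $check(status,false)$ guard plus the $set$-bracketing in interesting rules together prevent re-entrant communication during a body execution. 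Once this non-interference property is formalised, the rest of the induction is routine case analysis over the six transitions of $\omega_{r}$.
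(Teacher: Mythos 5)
Your proposal is correct and rests on the same core observation as the paper's proof, but it is packaged quite differently and is in fact more rigorous than what the paper offers. The paper's argument is an explicitly labelled sketch: it enumerates the phases of executing a query (activation of query constraints, firing of the $comm\_cons$ propagation rules, the optional compound-annotation rules, application of an original rule possibly followed by the auxiliary $annotation\_constraint$ and its communicating simplification rule) and simply observes at each phase that the store of $P_{trans}$ equals the store of $P$ or equals it union some auxiliary constraints. Your erasure function $\pi$ and the simulation lemma ($\pi(T)=\pi(T')$ or $\pi(T)\mapsto\pi(T')$ for every $P_{trans}$-transition, plus the lifting direction) are precisely the formal invariant that the paper's phase-by-phase walkthrough is implicitly maintaining, so the two arguments buy the same conclusion; yours just makes the induction and the bidirectional simulation explicit. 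The one place where you go genuinely beyond the paper is your identification of the order-sensitivity of $\omega_{r}$ as the main obstacle: the paper asserts without comment that ``whenever a rule is applicable in $P$, the same rule will also be applicable in $P_{trans}$,'' yet under the refined semantics inserting new occurrences into the program text can in principle change which occurrence an active constraint tries first. Your three-point discharge of this worry (propagation history prevents re-firing, auxiliary bodies reach \emph{drop} quickly, the $check/set$ status bracketing prevents re-entrant communication) addresses a real gap in the paper's sketch rather than merely reproducing it, and would need to be carried out carefully to turn either argument into a complete proof.
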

\begin{proof}(Sketch)
The below sketch shows how the constraint store changes over the course of running the same query $Q$ in $P$ and $P_{trans}$.
\\Executing a query in $P$ takes the following steps. The store at each step starts with the final value of the previous one
\begin{enumerate}
	\item Step 1: The constraint store of $P$ ($S_{P}$) start off by being empty ($\phi$).
	$S_{P_{step 1}}=\phi$.
	\item Step2: The constraints in the query start to get activated and enter the constraint store $S_{P}$. $S_{P_{step 2}}$ starts off by being equal to $S_{P_{step 1}}$. Every time a constraint $c$ gets activated, it enters the store converting $S_{P_{step 2}}$ to $S_{P_{step 2}}\cup c$.
		\item  Step 3:
Applying a rule $ r @ H_{k} \backslash H_{r}\; \Leftrightarrow \; Gu  \; | \;B$, adds $B$ to $S_{P}$. The components of $H_{r}$ are also removed from it. 
Thus after the rule application, $S_{P_{step 3}} = S_{P_{step 2}} \cup B - H_{r}$. 
\item Step 4: Go Back to step 3. Any applicable rule is fired changing the constraint store. Step 3 keeps on repeating until no more rules are applicable reaching a fixed point. 
\end{enumerate}
In $P_{trans}$ executing the same query undergoes the following steps. Similarly, the store at each step starts with the final value of the previous one
\begin{enumerate}
\item Step 1: Initially the constraints store is also empty. Thus, $S_{P_{trans_{step 1}}}=\phi$.
\item Step 2: The query constraints start to get activated and enter the constraint store $S_{P_{trans}}$. $S_{P_{trans_{step 2}}}$ starts off by being equal to $S_{P_{trans_{step 1}}}$. When a constraint $c$ gets activated, it enters the store. $S_{P_{trans_{step 2}}}$ thus is augmented with $c$ and the result is $S_{P_{trans_{step 2}}} = S_{P_{trans_{step 2}}}\cup c$.
\item Step 3: In $P_{trans}$, each time a constraint gets activated and enters the store, the corresponding rule: $comm\_cons\_{constraint}$ is triggered. $comm\_cons\_{constraint}$ is a propagation rule. Thus it does not remove any constraint from the store. The body of the rule communicates the constraint to the tracer, not adding anything to the store as well.
At this point $S_{P_{trans_{step 3}}}=S_{P_{trans_{step 2}}}$.
\item Step 4 (Optional): In $P_{trans}$, if there are any compound annotations for constraints ($cons_{1},\ldots,cons_{n}$) in $S_{P_{trans}}$, the corresponding rule $compound_{cons_{1},\ldots,cons_{n}}$ is fired adding to the $S_{P_{trans}}$ the auxiliary constraint: \\$annotation\_constraint_{cons_{1},\ldots,cons_{n}}\left(Arg{1},\ldots,Arg_{m}\right)$.
\\$annotation\_constraint_{cons_{1},\ldots,cons_{n}}\left(Arg{1},\ldots,Arg_{m}\right)$ fires the corresponding $comm\_cons\_{constraint}$ rule.
Thus, $S_{P_{trans_{step 4}}}=S_{P_{trans_{step 3}}}\cup Aux_{cons}$ where $Aux_{cons}=annotation\_constraint_{cons_{1},\ldots,cons_{n}}\left(Arg{1},\ldots,Arg_{m}\right)$.
\item Step 5: $P$ contains the same \textsf{CHR} rules as in $P_{trans}$. Consequently, whenever a rule is applicable in $P$, the same rule will be also applicable in $P_{trans}$.
Applying a rule $ r @ H_{k} \backslash H_{r}\; \Leftrightarrow \; Gu  \; | \;B$, adds $B$ to $S_{P}$. The components of $H_{r}$ are also removed from the store. 
Thus after the rule application, $S_{P_{trans_{step 5}}} = S_{P_{trans_{step 4}}} \cup B - H_{r}$. 
In case $r$ has an associated annotation rule,  Once the constraints in $B$ are added to the store, the corresponding rules $comm\_cons\_{constraint}$ are matched. However, since their guard fails (as the status is set to true in $P_{trans}$, the rules do not file.
In case $r$ has an associated annotation rule, the auxiliary constraint $annotation\_constraint_{r}$ is added to the store. Afterwards, $annotation\_constraint_{r}$ triggers the simplification rule $comm\_cons_{annotation\_constraint_{rule_{i}}}$ communicating the $annotation\_constraint_{r}$ constraint to the tracer. Since the rule is a simplification rule, $comm\_cons_{annotation\_constraint_{rule_{i}}}$ is then removed from the constraint store keeping $S_{P_{trans_{step 5}}} = S_{P_{trans_{step 4}}} \cup B - H_{r} $. 
\item Step 3 and 4 could be applied for the new constraint store. As sen previously, step 3 does not change the constraint store. Step 4 could however add some auxiliary constraints to $S_{P_{trans_{step 5}}}$.
\end{enumerate}
\end{proof}
As seen before, in all the previous steps, either $S_{P_{trans}}= S_{P}$ or $S_{P_{trans}}= S_{P}\cup Aux$ where $Aux$ contains some extra auxiliary constraints. Thus, the transformation does not change in the intended application of $P$.

\section{Applications}
\label{sec:app}
This section introduces different applications of the proposed semantics.
The applications cover different fields were animations were useful.
\subsection{Animating Java Programs}
The idea of annotating constraints with visual objects was extended to Java programs in \cite{DBLP:conf/iv/SharafAF16}. The execution path of the new programs is shown in Figure \ref{fig:javaexecpath}.
\begin{figure}%
\centering
\includegraphics[width=70mm]{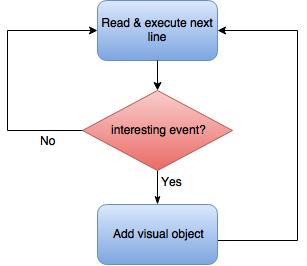}%
\caption{New Execution Path}%
\label{fig:javaexecpath}%
\end{figure}
Using visual annotation rules of \textsf{CHR} programs, Java programs could also be animated in a generic way.
The objective is to annotate method calls as interesting events. They are thus linked with visual objects. Thus, every time a method is invoked, its corresponding visual object is added. This results in animating the algorithm while it is running.
That way, the user does not have to get into any of the technical details of how the visualization is produced. They only need to state what they want to see.

For every interesting method $m\left(arg_1,\ldots,arg_n\right)$, one \textsf{CHR} rule is produced.
The rule simply communicates the fact the the method $m$ with the arguments $arg_1,\ldots,arg_n$ was called in the Java program.
The rule has the following format:
\begin{Verbatim}
            m(arg_1,...,arg_n) ==> communicate_event(m(arg_1,...,arg_n)).
\end{Verbatim}

For example, the below Java program performs the bubble sort algorithm:\footnote{The program uses the same algorithm provided in www.mathbits.com/MathBits/Java/arrays/Bubble.htm.}
\begin{lstlisting}[frame=single]
public class MySort {
	 public static void main(String[]args)
	 {
		initializeAndSort();
	 }
   public static void setValue(int[]num, int index, int newValue)
	 {
			num[index]=newValue;
	 }
	
	 public static void initializeAndSort()
	 {
		int[] numbers = new int[4];
		setValue(numbers, 0, 20);
		setValue(numbers, 1, 10);
		setValue(numbers, 2, 5);
		setValue(numbers, 3, 1);
		boolean swapped = true; 
		int temp;
		while (swapped==true) {
			swapped=false;
				for (int i = 0; i < numbers.length - 1; i++) {
					if (numbers[i] > numbers[i + 1]) 
					{
						temp = numbers[i]; // swap elements
						setValue(numbers,i,numbers[i+1]);
						setValue(numbers, i+1, temp);
						swapped = true; 
					}
				}
		}
	 }
}
\end{lstlisting}

As seen in Figure \ref{fig:javaann}, each Java method is annotated by linking it to an object. Similarly, the panel is populated with the corresponding visual aspects of the chosen object (node in the shown example).
The user can enter for every parameter a constant value. The value of the parameter could also be linked to one of the arguments of the annotated method through using the built-in function \verb+valueOf/1+.

\begin{figure}[!ht]%
\includegraphics[width=110mm]{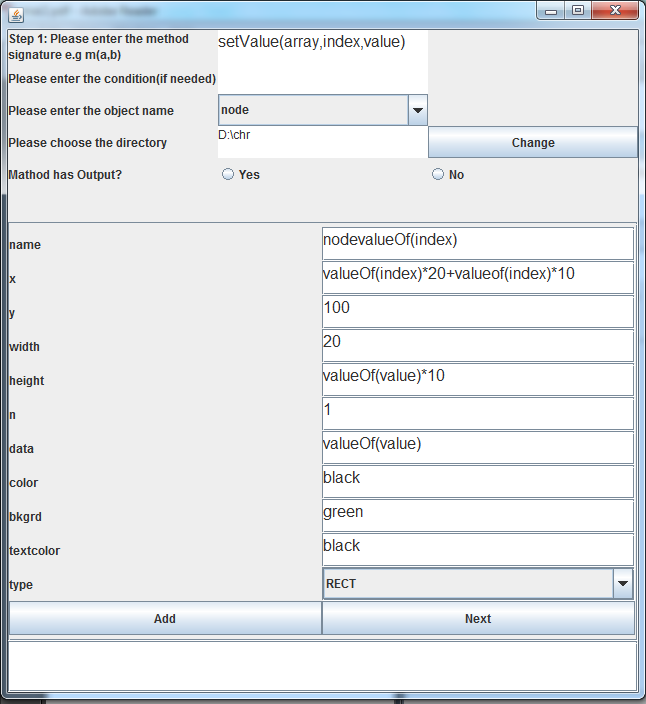}%
\caption{Annotating a Java method}%
\label{fig:javaann}%
\end{figure}

In the previous example, each time the method \verb+setValue/3+ is called, a node is generated. The position of the node is calculated through the argument $index$. Its height is a factor of the argument $value$.
Thus every time a value changes inside the array, the corresponding node is produced. This leads to visualizing the array and the changes happening to it.
The produced step-by-step animation is shown in Figure \ref{fig:animBubbleJava}.

Animating Java programs in the shown way is a generic one since it does not restrict the user to any specific visual data structure. It uses Jawaa providing the basic visual structures which could be used to do target animations. Such animations could also be prepared before or done while executing the code. In addition, since the system is built through SWI-Prolog, this makes it portable.

\begin{figure}
\centering
\begin{tabular}{ccc}
\subfloat{\includegraphics[width=20mm]{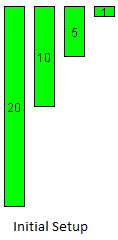}} & 
\subfloat{\includegraphics[width=20mm]{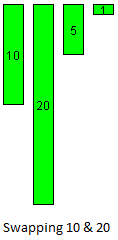}} &
{\subfloat{\includegraphics[width=20mm]{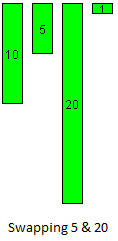}}} \\
{\subfloat{\includegraphics[width=20mm]{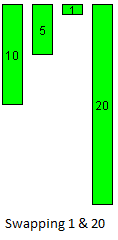}}} &
\subfloat{\includegraphics[width=20mm]{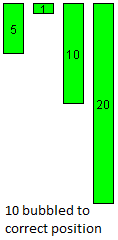}} & 
\subfloat{\includegraphics[width=20mm]{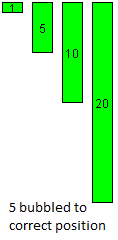}}\\
\end{tabular}
\caption{Bubble Sort Animation}
\label{fig:animBubbleJava}
\end{figure}

\subsection{Building Platforms to teach Mathematics through Animation}
In \cite{DBLP:conf/ruleml/SharafAF16}, the concept of animating programs was used to build a platform to practice different mathematical concepts. The user is offered with different ways to specify what the mathematical rule is.
The specified definition is then transformed into \textsf{CHR} programs.
For a simple rule, the user specifies a name for the rule, its input(s) and output as seen in Figure \ref{fig:simplerulefigure}.
\begin{figure}[!ht]
\centering
  \subfloat[Simple rule: Homepage]{\label{fig:fig2a}\includegraphics[width=60mm]{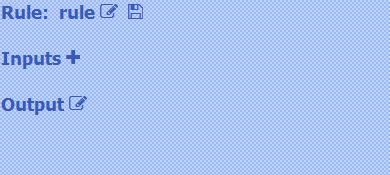}}
  ~
 \subfloat[Adding a new input]{\label{fig:fig2b}\includegraphics[width=60mm]{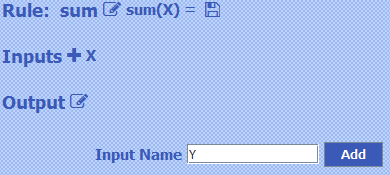}}
 \\
 \subfloat[Editing output]{\label{fig:fig2c}\includegraphics[width=60mm]{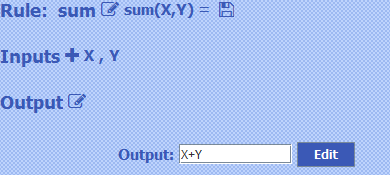}}
 ~
  \subfloat[Summation rule defined]{\label{fig:fig2d}\includegraphics[width=60mm]{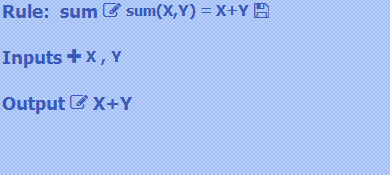}}
  \caption{Simple rules: inputs and outputs.}
	\label{fig:simplerulefigure}
\end{figure}
The user chooses how a number should be visualized. A number $n$ could be linked with any Jawaa object. It could be also linked to a number ($m$) of Jawaa objects as shown in Figure \ref{fig:linkmaths}. In that example, each number $n$ is linked to $n$ Jawaa image objects. Each image object has an x-coordinate and a y-coordinate and a path. An image object displays the image in the specified path.
\begin{figure}[!ht]
\centering
\includegraphics[width=90mm]{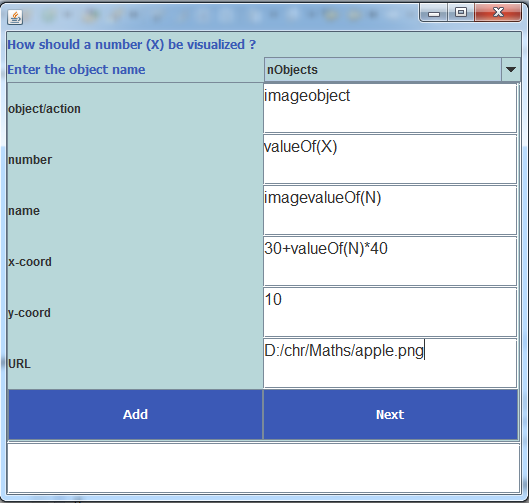}
  \caption{Link a number to a visual object}
  \label{fig:linkmaths}
\end{figure}

The parameter $valueOf\left(N\right)$ has a value of $0$ for the first object, $1$ for the second, $\ldots$, etc.
Two applications were built using the animation.
The first one is shown in Figure \ref{fig:figmathapp1}. The inputs are shown to the user. A number $x$ was annotated with  $x$ image objects with a path to an apple image. Thus each number $x$ is shown as $x$ apples. As seen from the figure, the actual y-coordinate shown to the user is a multiple of the value entered while annotation. Thus, every number is shown on one line. Users then click on ``Add Output'' to formulate their designated output. Since the output is a number, it is visualized in the same way. At any point, the user can choose to check their answer to get a corresponding message.
More details of the input generation is shown in \cite{DBLP:conf/ruleml/SharafAF16}. It is done randomly. However, it can also take into account some constraints/boundaries entered by the user.
\begin{figure}[!ht]
\centering
  \subfloat[Inputs]{\label{fig:fig5a}\includegraphics[width=60mm]{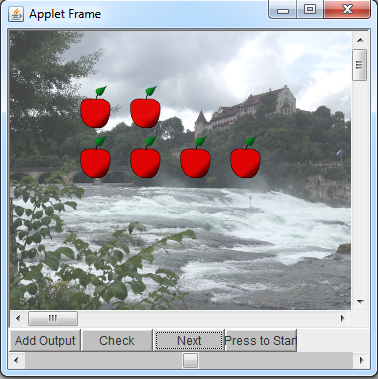}}
  ~
 \subfloat[Editing output I]{\label{fig:fig5b}\includegraphics[width=60mm]{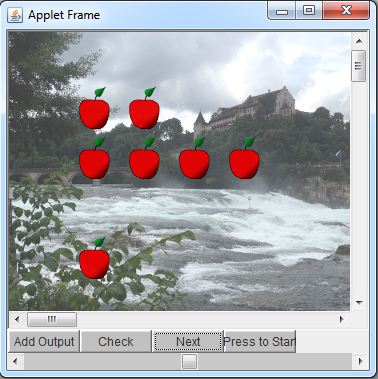}}
 \\
 \subfloat[Editing output II]{\label{fig:fig5c}\includegraphics[width=60mm]{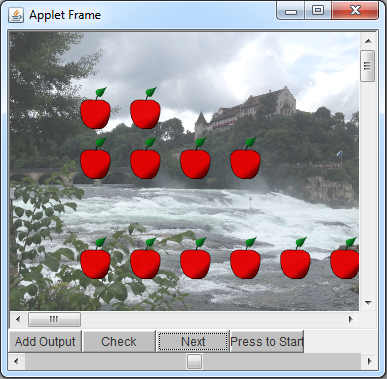}}
  \caption{Quiz 1}
  \label{fig:figmathapp1}
\end{figure}
Another possible animation (shown in Figure \ref{fig:figmaths2}) is to:
\begin{enumerate}
    \item Link every input number with a normal Jawaa circular node. The text inside
the node is its value. Its background is blue.
\item Link the output with a random number of nObjects displaying a group of
nodes. Each node is placed in a random position. The text inside each node
is also a random number. Such nodes have green backgrounds.
\item Link the output with a Jawaa circular node with the name (\verb+jawaanodeout+)
displaying the actual output of the rule. It is also placed at a random position.
Its background is green as well. The output thus has two groups of nodes associated to it.
\item Add an annotation rule linking the output constraint with an \verb+onclick+ command
for the object \verb+jawaanodeout+. Once it is clicked, the \verb+changeParam+
command is activated changing its color to red. Thus the only node whose color changes when clicked is the correct output node.
\end{enumerate}
\begin{figure}[!ht]
\centering
  \subfloat[Randomly placed nodes]{\label{fig:fig6a}\includegraphics[width=75mm]{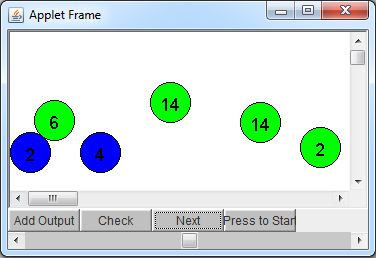}}
 \\\subfloat[Highlighted node after clicking]{\label{fig:fig6b}\includegraphics[width=75mm]{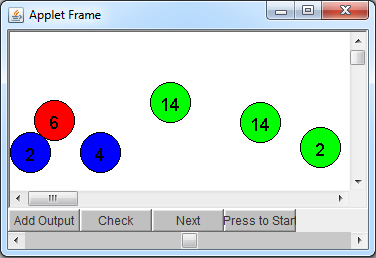}}
\caption{Quiz 2}
  \label{fig:figmaths2}
\end{figure}

\subsection{Animating Cognitive Models}
Another application of animating \textsf{CHR} programs was using it to animate cognitive architectures and the execution of cognitive models through them as shown in \cite{DBLP:conf/gcai/Nada}.
A cognitive architecture includes the basic aspects of any cognitive agent. It consists of different correlated modules \cite{CBO9780511816772A008}. 
Adaptive Control of Thought-Rational (ACT-R) is a well-known cognitive architecture. It was developed to deploy models in different fields including, among others, learning, problem solving and languages \cite{anderson_atomic_1998,Anderson04anintegrated}.
Through animating the execution, users get to see at each step not only details about the model. However, they are also able to visually see the modules of the architecture at all steps of execution.
The idea is to use the previously proposed \textsf{CHR} implementation of ACT-R \cite{Daniel:thesis}. The implementation represented the ACT-R architecture and how models are executed through \textsf{CHR} constraints and rules.
This section shows how animation was achieved through using annotation rules.

ACT-R \cite{anderson_atomic_1998,Anderson04anintegrated} is a cognitive architecture used to execute different type of models simulating human behavior.
ACT-R has different modules integrated together to simulate the different components of the mind that have to work together to reach plausible cognition. ACT-R has different types of buffers holding pieces of information/chunks. At each point in time, a buffer can have one piece of information.
A module can only access the contents of a buffer through issuing a request that is handled by the procedural module.
ACT-R chooses at each step an applicable production rule for execution.
In its \textsf{CHR} implementation, execution is triggered by the constraint \verb+run/0+. $run$ is associated with multiple annotation rules to produce the initial view of the architecture as shown in Figure \ref{fig:basicmodules}.
Figure \ref{fig:basicmodules} shows the basic ACT-R modules.
\begin{figure}[!ht]
\centering
  \includegraphics[width=80mm]{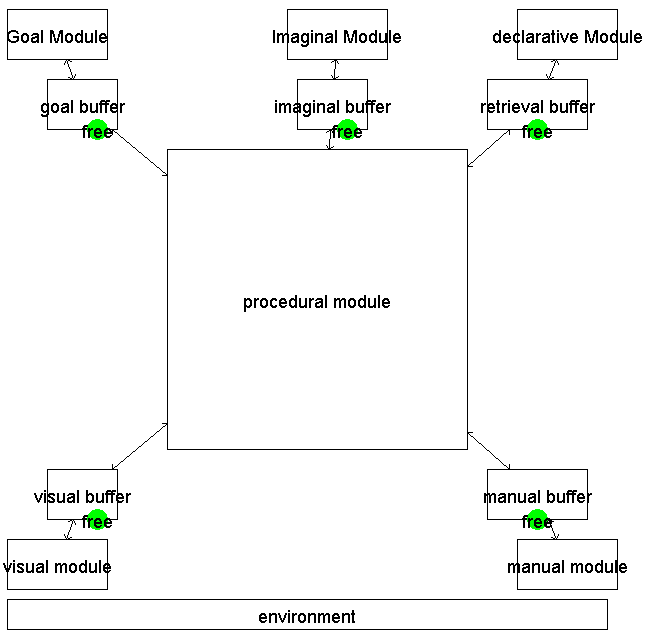}
	\caption{A snapshot of the first panel the user gets. It shows the basic modules of ACT-R as described in \cite{Daniel:thesis,inbookcog}}
	\label{fig:basicmodules}
\end{figure}
\subsubsection{Declarative Module}
\(\)\\This module holds the information humans are aware of. Its corresponding buffer is referred to as the retrieval buffer. Information is represented as chunks of data. Each chunk has a type. 
In the \textsf{CHR} implementation, the information chunks are represented with the \textsf{CHR} constraints: \verb+chunk/2+ and \verb+chunk_has_slot/3+.
\verb+chunk(N,T)+ {represents a chunk named} $N$ with the type $T$. {For example}, \verb+chunk(d,count_order)+ represents a chunk named $d$ with type $count\_order$. On the other hand, \verb+chunk_has_slot/3+ represents the values in the slots of a chunk. {The two constraints} \verb+chunk_has_slot(d,first,3)+ and  \verb+chunk_has_slot(d,second,4)+
represent that chunk $d$ has the values $3$ and $4$ in the slots $first$ and $second$ respectively. This represents the information that $3$ is less than $4$.

\subsubsection{Buffer System}
\(\)\\
\begin{figure}[!ht]
\centering
\subfloat[The retrieval buffer is empty. Its state is busy.]{
        \label{subfig:fig1}
        \includegraphics[width=95mm]{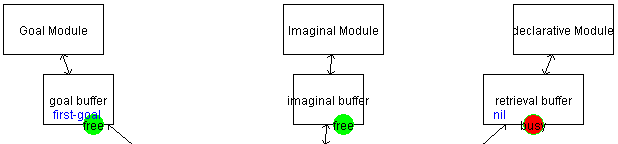} } 
				
\subfloat[As a result of a request made, the retrieval buffer has $c$ and its state is free.]{
        \label{subfig:fig2}
        \includegraphics[width=95mm]{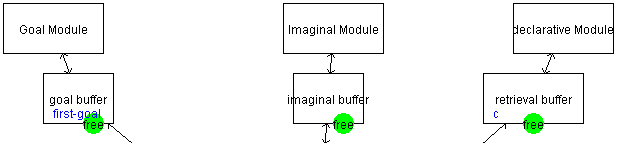} } 
				
\subfloat[The buffer is performing a request. Its state is busy.]{
        \label{subfig:fig3}
        \includegraphics[width=95mm]{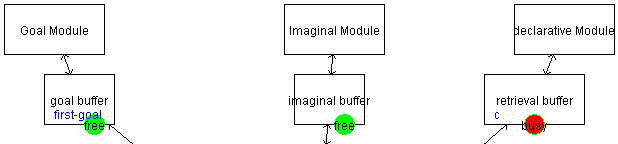}} 
				
\subfloat[The buffer contains $d$ and it is free again]{
        \label{subfig:fig4}
        \includegraphics[width=95mm]{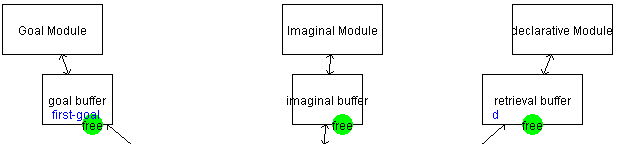}} 
\caption{Buffer System Visualization}
\label{fig:buffers}
\end{figure}
The \textsf{CHR} constraint \verb+buffer(B,C)+ represents the fact that buffer \verb+B+ is holding the chunk $C$. The state $S$ of a buffer $B$ is represented by the constraint \verb+buffer_has_state(B,S)+.
A buffer has one of three states: either $free$, $busy$ or $error$. The buffer is $busy$ while completing a request. 
The state of a buffer is set to $error$ if the request was not successful. 
As shown in Figure \ref{fig:buffers}, each buffer is associated with several visual features.
First of all, \verb+buffer(B,C)+ is annotated with a textual object showing the content of the buffer. Each \verb+buffer(B,C)+ also produces a circular colored (initially green) node. \verb+buffer_has_state(B,S)+ is associated with annotation rules that change the color of the circular node according to the value of $S$.

\subsubsection{Procedural Module: Timing \& Prioritizing Actions in ACT-R}
\(\)\\
\begin{figure}
\captionsetup[subfloat]{farskip=2pt,captionskip=1pt}
  \centering 
  \subfloat[][]{\includegraphics[width=55mm]{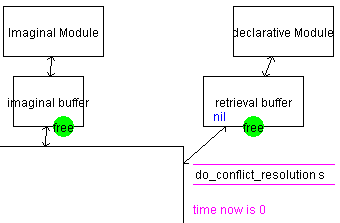} \label{fig:step2}}%
  \quad 
  \subfloat[][do\_conflict\_resolution is dequeued. The rule \emph{start} is a matching rule chosen to be applied.]{\includegraphics[width=55mm]{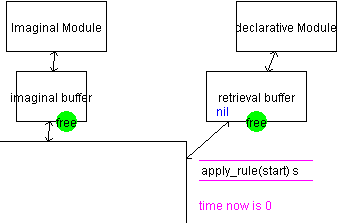} \label{fig:step3}} 
	\quad 
  \subfloat[][apply\_rule(start) is dequeued]{\includegraphics[width=55mm]{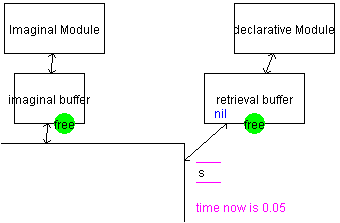} \label{fig:step4}} 
\quad 
  \subfloat[][The action events of \emph{start} are added to the queue.]{\includegraphics[width=55mm]{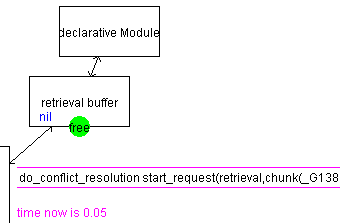} \label{fig:step6}}%
	\quad
  \subfloat[][Starting requesting the declarative module]{\includegraphics[width=55mm]{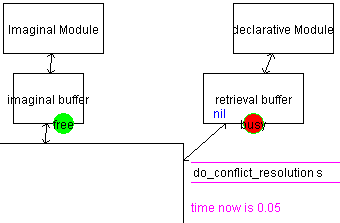} \label{fig:step8}}%

	\subfloat[][]{\includegraphics[width=55mm]{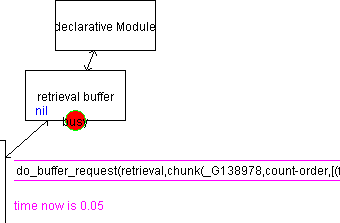} \label{fig:step9}}%
	\quad
  \subfloat[][]{\includegraphics[width=55mm]{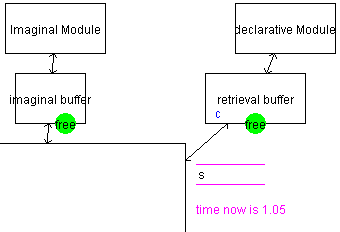} \label{fig:step12}}%
	\caption{Animating Scheduling}
	 \label{fig:cont4}
	\end{figure} 
The current timing of the ACT-R system is represented in its \textsf{CHR} implementation by a constraint \verb+now/1+. 
The initial constraint $run$ produces a textual object. $now$ is associated with an annotation rule to change the value of the text.
The implementation has a central scheduling unit which keeps track of the events to be performed and their timings.
The \textsf{CHR} implementation uses a priority queue to keep track of the actions. The scheduler removes the first event in the queue from time to time.
Event $A$ precedes $B$ in the queue if $A$ has less timing than $B$. If they have the same timings, $A$ precedes $B$ if it has a higher priority.
The order of elements in the queue is represented by the constraint \verb+->/2+. \verb+A -> B+ means that $A$ precedes $B$.
The initial constraint $run$ produces a Jawaa queue object.
The constraint $A->B$ is associated with a rule that fires an action to insert in the queue $B$ after $A$. An example is shown in Figure \ref{fig:cont4}.

\section{Related Work}
In general, algorithm animation or software visualization produces abstractions
for the data and the operations of an algorithm. The different states of
the algorithm are represented as images that are animated according to the
different interactions between such states \cite{bookintro}. In \cite{DBLP:journals/vlc/HundhausenDS02}, some of the scenarios
in which Algorithm Visualization (AV) could be used were discussed. Such
scenarios include using AV technologies in lecture slides, or in practical laboratories,
or for in-class discussions, or in assignments where students could
for example produce their own visualizations or in office hours for instructors
to find bugs quickly. Moreover, such visualizations could be useful for
debugging and tracing the implementations of different algorithms.

In \cite{DBLP:journals/vlc/HundhausenDS02}, a meta-study of 24 experimental studies was performed. Through
the analysis done it was found that such visualizations are educationally
effective.
As introduced in \cite{DBLP:journals/vlc/HundhausenDS02}, eleven studies show significant difference between a
group of study using some configuration of AV technology and another group
that is either not using AV technology at all or using a different configuration.
In \cite{DBLP:journals/vlc/HundhausenDS02}, after performing the analysis, it was found out that the biggest impact
on educational effectiveness results from how students use AV technology
rather than what they see.

In \cite{developing}, a visualization tool for Artificial Intelligence (AI) searching algorithms
was introduced. Moreover, a study was performed in order to know
the effect of using the tool on students. According to \cite{developing}, students who have better visual representation of the changing data structures understood
the algorithms better. The results of the performed study suggest that using
visualization tools could help students in understanding such searching
algorithms.

Several systems and steps towards visualizing the execution of the different
types of algorithms have been made.
For example in \cite{Baec:98}, a 30-minute film designed to teach nine sorting algorithms
was used for algorithm animation as an alternative to doing such
demonstration manually on a board.
Some systems have also provided their users with the possibility to visualize
the execution of different algorithms. For example, XTANGO \cite{Stasko:xtango} is a general
purpose animating system that supports the development of algorithm
animations where the algorithm should be implemented in C or another language
such that it produces a trace file to be read by a C program driver.
The important operations and events that should be highlighted during the
execution of the program should also be portrayed.
Another system is BALSA \cite{Brown:1984:SAA:800031.808596}, in which the notion of interesting events was
used where the animator and the algorithm designer have to agree on a plan
for visualization to identify interesting events that could change the visualized
images.
Zeus \cite{DBLP:conf/vl/Brown91}, also uses the notion of interesting events and annotates the algorithm
with markers that identify the basic operations. In Zeus, its preprocessor
Zume, reads the event specifications and generates definitions for
algorithm and view classes.

In \cite{238854}, and \cite{Eisenstadt:1987:GDT:1625015.1625030}, visualization of logic programs was presented. In \cite{238854},
logic programs were graphically represented using a variation of cyclic AND/OR
graphs. The structure of logic programs was represented through static
graphs that show and correspond to the structure of the source code. Dynamic
graphs show steps of the solution. Finally, a set of binding dependency
graphs were used for showing how the values of the different variables were
generated.
In \cite{Eisenstadt:1987:GDT:1625015.1625030}, Augmented AND/OR trees (AORTA) were used for having a tracing
and debugging facility for Prolog. In the produced graphs all subgoals and
bindings are shown.

On the other hand some visualization tools were offered for constraint
programs specifically. For example, in \cite{debuggingcp}, the tool Grace, was offered as a
constraint tracing environment on top of ECLiPSe \cite{DBLP:citeseer_oai:CiteSeerX.psu:10.1.1.53.5817} where the focus is on
the search space and the domains. The main target of Grace is CLP(FD) programs specifically programs that use labeling of finite domains and backtracking
search. The FD variables and their domains are shown to users. In
addition, a variable stack is used to display the current position in the search
space where each row corresponds to a labeled variable. The variable's domain
in addition to the depth in the search space and the variable position
are also shown through the rows. The displayed domains also differentiate
between the current values, the values that are still be to be tried and the
values that have been tried and failed.

The Oz Explorer provided through \cite{ozexp} also supports the development of
constraint programs. This visual constraint programming tool is provided
for the language Oz \cite{kerneloz,Vol1000} which is a concurrent constraint language. The
visualized object in this tool is the search space. The search tree is visualized
as it is explored. Nodes carry information about corresponding constraints.
Users could interact with the visualized tree through expanding and collapsing
different parts.

Similarly, in \cite{Simonis:2000:SV:646018.678425}, the main focus is on search trees and the offered tool is used
for debugging and analyzing trees generated from different finite domain constraint
programs. The tool offers different views for the user through which
he/she is able to build up information regarding the variables and their domains,
in addition to the search tree and the constraints. Users are offered
information regarding search and constraint propagation. They could also
view and analyze the change of constraints and variables along a path of the
tree.

\section{Conclusions}
In conclusion, the paper presented a formalization for embedding animation features into \textsf{CHR} programs.
The new extension, $CHR^{vis}$ is able to allow for dynamic associations of constraints and rules with visual objects. The annotation rules are thus activated on the program's execution to produce algorithm animations. 
Although the idea of using interesting events was introduced in earlier work, it was (to the best of the authors' knowledge) never formalized before. In fact, no operational semantics for animation was proposed before. The paper offered operational semantics for $CHR^{vis}$. It thus provides a foundation for formalizing the animation process in general and for \textsf{CHR} programs in particular.
In the future, with the availability of formal foundations through $\omega_{vis}$ and $\omega_{vis_{r}}$, the possibility of using $CHR^{vis}$ as the base of a pure a visual representation for \textsf{CHR} should be investigated. 

\bibliographystyle{ACM-Reference-Format}
\bibliography{Master}

\end{document}